\newcommand{\sv}[1]{}
\newcommand{\lv}[1]{#1}
\newcommand{\hy}{\hbox{-}\nobreak\hskip0pt}
\newcommand{\SB}{\{\,}%
\newcommand{\SM}{\;{:}\;}%
\newcommand{\SE}{\,\}}%
\newcommand{\Card}[1]{|#1|}
\newcommand{\R}{\mathbb{R}}
\newcommand{\Nat}{\mathbb{N}}
\newcommand{\NP}{\text{\normalfont NP}}
\newcommand{\FPT}{\text{\normalfont FPT}}
\newtheorem{THE}{Theorem} 
\newtheorem{LEM}[THE]{Lemma}
\newtheorem{OBS}[THE]{Observation} 
\newtheorem{PRO}[THE]{Proposition} 
\newtheorem{COR}[THE]{Corollary} 
\newtheorem{EXL}[THE]{Example} 
\newtheorem{DEF}[THE]{Definition} 
\newcommand{\BBB}{\mathcal{B}}
\newcommand{\JJJ}{\mathcal{J}}
\newcommand{\TTT}{\mathcal{T}}
\newcommand{\FFF}{\mathcal{F}}
\newcommand{\PPP}{\mathcal{P}}
\newcommand{\defproblem}[3]{
  \vspace{2pt}
\noindent\fbox{
  \begin{minipage}{0.96\columnwidth}
  #1 \\ 
  {\bf{Input:}} #2  \\
  {\bf{Question:}} #3
  \end{minipage}
  }
  \vspace{2pt}
}
\newcommand{\CSP}{\textsc{CSP}}
\newcommand{\cspi}{\mathbf{I}}
\newcommand{\tuple}[1]{\langle{#1}\rangle}  
\newcommand{\concat}{\circ}
\newcommand{\noftup}{\sharp_{\emph{tup}}}
\newcommand{\SOL}{\textup{SOL}}
\newcommand{\join}{\Join}
\newcommand{\proj}[2]{\pi_{#2}(#1)}
\newcommand{\bdbi}{\beta}
\newcommand{\jtbi}{\varrho}
\newcommand{\jtbie}{\rho}
\newcommand{\PR}{\textsf{prune}}
\newcommand{\jw}{\mathsf{jw}}
\newcommand{\jwl}{\mathsf{ljw}}
\newcommand{\CE}[1]{E}
\newcommand{\fecm}{\gamma}
\newcommand{\fecn}{\mathsf{fec}}
\newcommand{\fhtw}{\mathsf{fhtw}}
\newcommand{\guards}{\mathsf{\delta}}
\newcommand{\bw}{\mathsf{bw}}
\newcommand{\lbw}{\mathsf{lbw}}
\newcommand{\weight}{w}
\newcommand{\bigoh}{\mathcal{O}}
\begin{document}

  \title{A Join-Based Hybrid Parameter for \\ Constraint Satisfaction\thanks{Robert Ganian acknowledges support by the Austrian Science Fund (FWF, Project P31336) and is also affiliated with FI MUNI, Czech Republic.}}

%

\author{Robert Ganian\inst{1} \and
Sebastian Ordyniak\inst{2} \and Stefan Szeider\inst{1}}

\authorrunning{R.\ Ganian and S.\ Ordyniak and S.\ Szeider}
\institute{Algorithms and Complexity group, TU Wien, Vienna, Austria \and 
Algorithms group, University of Sheffield, Sheffield, UK}


\maketitle

\begin{abstract}
  We propose \emph{joinwidth}, a new complexity parameter for the
  Constraint Satisfaction Problem (CSP). The definition of
  joinwidth is based on the arrangement of basic operations on
  relations (joins, projections, and pruning), which
  inherently reflects the steps required to solve the instance.
  We use joinwidth to obtain polynomial-time algorithms (if a
  corresponding decomposition is provided in the input) as well as
  fixed-parameter algorithms (if no such decomposition is provided) for
  solving the CSP.
  
  Joinwidth is a \emph{hybrid} parameter, as it takes both the
  graphical structure as well as the constraint relations that appear
  in the instance into account. It has, therefore, the potential to
  capture larger classes of tractable instances than purely
  \emph{structural} parameters like hypertree width and the more
  general fractional hypertree width (fhtw). Indeed, we show that any
  class of instances of bounded fhtw also has bounded joinwidth, and
  that there exist classes of instances of bounded joinwidth and
  unbounded fhtw, so bounded joinwidth properly generalizes bounded
  fhtw.


  We further show that bounded joinwidth also properly generalizes
  several other known hybrid restrictions, such as fhtw with degree
  constraints and functional dependencies. In this sense, bounded
  joinwidth can be seen as a unifying principle that explains the
  tractability of several seemingly unrelated classes of CSP instances.
\end{abstract}
%
%
%

\section{Introduction}
The Constraint Satisfaction Problem (CSP) is a central and generic
computational problem that provides a common framework for many
theoretical and practical applications in AI and other areas of
Computer Science~\cite{RossiVanBeekWalsh06}. An
instance of the CSP consists of a collection of variables that must be
assigned values subject to constraints, where each constraint is given
in terms of a relation whose tuples specify the allowed combinations
of values for specified variables.

CSP is \NP-complete in general. A central line of research is
concerned with the identification of classes of instances for which
the CSP can be solved in polynomial time. The two main approaches are
to define classes either in terms of the constraint relations that may
occur in the instance (\emph{syntactic restrictions}; see, e.g.,
\cite{BulatovJeavonsKrokhin05}), or in terms of the constraint
hypergraph associated with the instance (\emph{structural
  restrictions}; see, e.g.,~\cite{Grohe07}). There are also several
prominent proposals for utilizing simultaneously syntactic and
structural restrictions called \emph{hybrid restrictions} (see, e.g.,
\cite{Naanaa13,CooperDMETZ16,CooperZ17,CohenCJZ19}).

Grohe and Marx~\cite{GroheMarx14} showed that CSP is polynomial-time
tractable whenever the constraint hypergraph has bounded
\emph{fractional hypertree width}, which strictly generalizes
previous tractability results based on hypertree width
\cite{GottlobLeoneScarcello02} and acyclic
queries~\cite{Yannakakis81}.  Bounded fractional hypertree
width is the most general known structural restriction 
that gives rise to polynomial-time tractability of CSP.

\lv{
As bounded fractional hypertree width is a structural restriction that
is completely oblivious to the relations present in the
instance, it is natural to expect that one can modify fractional
hypertree width to take the shape of relations into account. This is indeed
possible, but does not lead to a compact notion that is well-suited for
further theoretical analysis. 
}
\paragraph{Our contribution: Joinwidth.} 
We propose a new hybrid restriction for the CSP, the width parameter
\emph{joinwidth}, which is based on the arrangement of basic
relational operations along a tree, and not on hypertree
decompositions. Interestingly, as we will show, our notion strictly
generalizes (i)~bounded fractional hypertree width, (ii)~recently introduced
extensions of fractional hypertree width with degree constraints and
functional dependencies~\cite{Khamis0S17}, (iii)~various
prominent hybrid restrictions~\cite{CohenCGM11}, as well as 
(iv)~tractable classes  based on \emph{functionality} and \emph{root
  sets}~\cite{DevilleH91,David95,CohenCGM11}.
Hence, joinwidth gives rise to a common framework that captures several
different tractable classes considered in the past.
Moreover, none of the other hybrid parameters that we are aware of~\cite{CooperZ17},
such as classes based on the Broken Triangle Property or topological
minors~\cite{CooperDMETZ16,CohenCJZ19} and directional rank~\cite{Naanaa13}, 
generalize fractional hypertree width and hence all of them are either
less general or orthogonal to joinwidth.

Joinwidth is based on the arrangement of the constraints on the leaves
of a rooted binary tree which we call a \emph{join decomposition}. The
join decomposition indicates the order in which relational joins are
formed, where one proceeds in a bottom-up fashion from the leaves to
the root, labeling a node by the join of the relations at its
children, and projecting away variables that do not occur in relations
to be processed later. Join decompositions are related to (structural)
\emph{branch decompositions} of hypergraphs, where the hyperedges are
arranged on the leaves of the
tree~\cite{AlekhnovichRazborov02,Grohe07a,RobertsonSeymour83}. Related
notions have been considered in the context of query
optimization~\cite{AhmedSenPoessChakappen14,IoannidisKang91}. However,
the  basic form of  join decompositions using only relational joins and
projections is still a weak notion that cannot be used
to tackle instances of bounded fractional hypertree width efficiently.
We identify a further operation that---in conjunction with
relational joins and projections---gives rise to the powerful new
concept of joinwidth that captures and extends the various known
tractable classes mentioned.
This third operation \emph{prunes} away all the tuples from an
intermediate relation that are inconsistent with a relation to be
processed later. 


A join decomposition of a CSP instance specifies the order in which
the above three operations are applied, and its \emph{width} is the
smallest real number $w$ such that each relation appearing within the
join decomposition has at most $m^w$ many tuples (where $m$ is the
maximum number of tuples appearing in any constraint relation of the
CSP instance under consideration). 
The joinwidth of a CSP
instance is the smallest width over all its join decompositions.
Observe that joinwidth is a hybrid parameter---it depends on both the graphical
structure as well as the constraint relations appearing in the
instance.


 
\paragraph{Exploiting Joinwidth.}
Similarly to other width parameters, also the property that a class of
CSP instances has bounded joinwidth can only be exploited for CSP
solving if a decomposition (in our case a join decomposition)
witnessing the bounded width is provided as part of the input. While
such a join decomposition can be computed efficiently from a
fractional hypertree decomposition or when the CSP instance belongs to
a tractable class based on functionality or root sets mentioned
earlier, we show that computing an optimal join decomposition is
\NP-hard in general, mirroring the corresponding \NP-hardness of
computing optimal fractional hypertree
decompositions~\cite{FischlGottlobPichler18}. 
  
However, this obstacle disappears if we move from the viewpoint of polynomial-time tractability to
\emph{fixed-parameter tractability} (\FPT).
Under the FPT viewpoint, one considers classes of instances $\cspi$
that can be solved by a fixed\hy parameter algorithm---an algorithm
running in time $f(k)|\cspi|^{O(1)}$, where $k$ is the parameter
(typically the number of variables or constraints), $|\cspi|$ is
the 
size of the instance, and $f$ is a computable
function~\cite{FlumGrohe06,GottlobSzeider08,Grohe02}.  We note that it
is natural to assume that $k$ is much smaller than $\Card{\cspi}$ in
typical cases. The use of fixed-parameter tractability
is well motivated in the CSP setting; see, for instance, Marx's
discussion on this topic~\cite{Marx13}.

Here, we obtain two single-exponential fixed\hy parameter algorithms
for instances of bounded joinwidth (i.e., algorithms with a running time of
$2^{\bigoh(k)}\cdot |\cspi|^{\bigoh(1)}$): one where $k$ is the number
of variables, and the other when $k$ is the number of constraints.  In
this setting, we do not require an associated join decomposition to be
provided with the input.

Under the FPT viewpoint, Marx~\cite{Marx13} previously introduced the structural
parameter \emph{submodular width} (bounded submodular width
  is equivalent to bounded \emph{adaptive width} \cite{Marx11}),
which is strictly more general than fractional hypertree width, but
when bounded only gives rise to fixed\hy parameter tractability and not
polynomial-time tractability of CSP.  In fact, Marx showed that
assuming the Exponential Time Hypothesis~\cite{ImpagliazzoPaturiZane01}, bounded submodular width is the most
general purely structural restriction that yields fixed\hy parameter
tractability for CSP.  
However, as joinwidth is a hybrid parameter, it can (and we show that it does) remain bounded even on instances of unbounded submodular width---and the same holds also for the recently introduced extensions of submodular
width based on functional dependencies and degree bounds~\cite{Khamis0S17}.

\paragraph{Roadmap.}
After presenting the required
preliminaries on (hyper-)graphs, \CSP{}, and
fractional hypertree width in Section~\ref{sec:pre}, we introduce and
motivate join decompositions and joinwidth in
Section~\ref{sec:jw}. We establish some fundamental properties
of join decompositions, provide our tractability result for \CSP{} for the
case when a join decomposition is given as part of the input, and then obtain our \NP{}-hardness result for computing join
decompositions of constant width. Section~\ref{sec:justification} provides an in-depth
justification for the various design choices underlying join
decompositions; among others, we show that the pruning step is
required if the aim is to generalize fractional
hypertree width. Our algorithmic applications for joinwidth are
presented in Section~\ref{sec:tract-classes}: for instance, we show that
joinwidth generalizes fractional hypertree width, but also other known
(and hybrid) parameters such as functionality, root sets, and Turan
sets. Section~\ref{sec:solving} contains our fixed\hy parameter
tractability results for classes of CSP instances with bounded
joinwidth. Finally, in Section~\ref{sec:beyond}, we compare the algorithmic power
of joinwidth to the power of algorithms which rely on the unrestricted use of join and projection operations. 

\section{Preliminaries}\label{sec:pre}

We will use standard graph terminology~\cite{Diestel12}. 
An \emph{undirected graph} $G$ is a pair $(V,E)$, where $V$ or
$V(G)$ is the vertex set and $E$ or $E(G)$ is the edge set. 
All our graphs are simple and loopless. 
For a tree $T$ we use $L(T)$ to denote the set of its leaves.
For $i\in \Nat$, we let $[i]=\{1,\dots,i\}$.

\sv{\paragraph{Hypergraphs.}} 
\lv{\subsection{Hypergraphs, Branchwidth and Treewidth}}
%
Similarly to graphs, a \emph{hypergraph} $H$ is a pair $(V,E)$ where $V$ or
$V(H)$ is its vertex set and $E$ or $E(H)\subseteq 2^V$ is its set of hyperedges.
We denote by $H[V']$ the hypergraph \emph{induced} on the vertices in $V' \subseteq V$, i.e.,
the hypergraph with vertex set $V'$ and edge set $\SB e\cap V' \SM e \in E \SE$.
\lv{
Every subset $F$ of $E(H)$ defines a
\emph{cut} of $H$, i.e., the pair $(F,E(H)\setminus F)$. We denote by
$\guards_H(F)$ (or just $\guards(F)$ if $H$ is clear from the context)
the set of \emph{cut vertices} of $F$ in $H$, i.e., $\guards(F)$
contains all vertices incident to both an edge in $F$ and an edge in
$E(H) \setminus F$. Note that $\guards(F)=\guards(E(H)\setminus F)$.

Let $H$ be a hypergraph. A \emph{branch-decomposition} of $H$ is a
pair $\BBB=(B,\bdbi)$, where $B$ is a rooted binary tree and $\bdbi : L(B)
\rightarrow E(H)$ is a bijection between the leaves $L(B)$ of $B$ and
the edges of $H$. For simplicity, we write $\bdbi(B')$ to denote
the set $\SB \bdbi(l) \SM l \in L(B')\SE$ of edges for a subtree $B'$ of $B$.
The \emph{branchwidth} of an edge $e$ of $B$, denoted by $\bw(e)$,
is equal to $|\guards_H(\bdbi(B'))|$, where $B'$ is any of the two
components of $B - e$. The \emph{branchwidth} of $\BBB$ is equal to
$\max_{e \in E(B)}\bw(e)$ and the branchwidth of $H$, denoted by $\bw(H)$, is the minimum
branchwidth of any of its branch-decompositions. We say that $\BBB$ is
a \emph{linear branch decomposition} if every inner node of $B$ is
adjacent to at least one leaf node and define the \emph{linear
  branchwidth} of $H$, denoted by $\lbw(H)$, as the minimum
branchwidth over all linear branch decompositions of $H$.

A \emph{tree-decomposition} of $H$ is a pair $\TTT=(T,(B_t)_{t \in
  V(T)})$, where $T$ is a tree and $B_t \subseteq V(H)$ for every $t
\in V(T)$ such that: (1) for every $e \in E(H)$ there is a $t \in
V(T)$ such that $e \subseteq B_t$ and (2) for every $v \in V(H)$ the
set $\SB t \in V(T) \SM v \in B_t \SE$ induces a non-empty subtree of $T$. The
\emph{treewidth} of $\TTT$ is equal to
$\max_{t \in V(T)}(|B_t|-1)$, and the \emph{treewidth} of $H$ is the
minimum treewidth of any tree-decomposition of $H$. We say that $\TTT$
is a \emph{path-decomposition} if $T$ is a path and define the
\emph{pathwidth} of $H$ as the minimum treewidth
of any path-decomposition of $H$. 
}
\smallskip

\sv{\paragraph{The Constraint Satisfaction Problem.}}
\lv{\subsection{The Constraint Satisfaction Problem}}
Let $D$ be a set and $n$ and $n'$ be natural numbers. 
An $n$-ary relation on $D$
is a subset of $D^n$. For a tuple $t \in D^n$, we denote by $t[i]$,
the $i$-th entry of~$t$, where $1 \leq i \leq n$. For
two tuples $t \in D^n$ and $t' \in D^{n'}$, we denote by $t \concat
t'$, the concatenation of $t$ and $t'$.

An instance of a \emph{constraint satisfaction problem} (CSP) $\cspi$
is a triple $\tuple{V,D,C}$, where $V$ is a finite set of variables
over a finite set (domain) $D$, and $C$ is a set of constraints. A
\emph{constraint} $c \in C$ consists of a \emph{scope},
denoted by $S(c)$, which is a completely ordered subset of
$V$, and a relation, denoted by $R(c)$, which is a
$|S(c)|$-ary relation on $D$. 
We let $|c|$ denote the number of tuples in $R(c)$ and $|\cspi|=|V|+|D|+\sum_{c\in C}|c|$. Without loss of generality, we assume that each variable occurs in the scope of at least one constraint.

A \emph{solution} for $\cspi$ is an assignment $\theta : V \rightarrow D$ of the variables in $V$ to
domain values (from $D$) such that for every constraint $c \in C$ with scope $S(c)=(v_1,\dotsc,v_{|S(c)|})$,
the relation $R$ contains the tuple $\theta(S(c))=(\theta(v_1),\dotsc,\theta(v_{|S(c)|}))$. 
We denote by $\SOL(\cspi)$ the constraint
containing all solutions of $\cspi$, i.e., the constraint with scope
$V=\{v_1,\dotsc,v_n\}$, whose relation contains one tuple
$(\theta(v_1),\dotsc,\theta(v_n))$ for every solution $\theta$ of
$\cspi$. The task in \CSP{} is to decide whether the instance $\cspi$
has at least one solution or in other words whether $\SOL(\cspi)\neq
\emptyset$. Here and in the following we will for convenience (and
with a slight abuse of notation) sometimes treat constraints like sets of tuples.

For a variable $v \in S(c)$ and a tuple $t \in R(c)$, we denote
by $t[v]$, the $i$-th entry of $t$, where $i$ is the position of $v$ in
$S(c)$. Let $V'$ be a subset of $V$ and let $V''$ be all the variables that appear in $V'$ and $S(c)$.
With a slight abuse of notation, we denote by $S(c)\cap V'$, the sequence $S(c)$
restricted to the variables in $V'$ and we denote by $t[V']$ the tuple $(t[v_1],\dotsc,t[v_{|V''|}])$,
where $S(c) \cap V'=(v_1,\dotsc,v_{|V''|})$.

Let $c$ and $c'$ be two constraints of $\cspi$.
We denote by $S(c) \cup S(c')$, the ordered set (i.e., tuple) $S(c) \concat (S(c')
\setminus S(c))$. The
\emph{(natural) join} between $c$ and $c'$, denoted by $c \join c'$, is
the constraint with scope
$S(c) \cup S(c')$ containing all tuples $t \concat t'[S(c')\setminus S(c)]$ such that $t \in
R(c)$, $t' \in R(c')$, and $t[S(c) \cap S(c')]=t'[S(c) \cap S(c')]$.
The \emph{projection} of $c$ to $V'$, denoted by $\proj{c}{V'}$, is the constraint
with scope $S(c)\cap V'$, whose relation contains all tuples $t[V']$
with $t \in R(c)$. We note that if $c$ contains at least one tuple, then projecting it onto a set $V'$ with $V'\cap S(c)=\emptyset$ results in the constraint with an empty scope and a relation containing the empty tuple (i.e., a tautological constraint). On the other hand, if $R(c)$ is the relation containing the empty tuple, then every projection of $c$ will also result in a relation containing the empty tuple.

For a CSP instance $\cspi=\tuple{V,D,C}$ we sometimes denote by $V(\cspi)$,
$D(\cspi)$, $C(\cspi)$, and $\noftup(\cspi)$ its set of variables $V$, its domain $D$,
its set of constraints $C$, and the maximum number of tuples in any
constraint relation of $\cspi$, respectively. For a subset $V'
\subseteq V$, we will also use $\cspi[V']$ to denote the sub-instance of $\cspi$ induced by the
variables in $V'\subseteq V$, i.e., $\cspi[V']=\tuple{V',D,\SB \proj{c}{V'}
\SM c \in C \SE}$. 
The \emph{hypergraph} $H(\cspi)$ of a CSP instance $\cspi=\tuple{V,D,C}$ is the
hypergraph with vertex set $V$ and edge set $\SB S(c) \SM c \in C\SE$.

It is well known that for every instance $\cspi$ and every instance
$\cspi'$ obtained by either (1) replacing two constraints in
$C(\cspi)$ by their natural join or (2) adding a projection of a
constraint in $C(\cspi)$, it holds that $\SOL(\cspi)=\SOL(\cspi')$. As
a consequence, $\SOL(\cspi)$ can be computed by performing, e.g., a
sequence of joins over all the constraints in~$C$.

\newcommand{\projC}{\PPP}



\smallskip
\sv{\paragraph{Fractional Hypertree Width.}}
\lv{\subsection{Fractional Hypertree Width}}
Let $H$ be a hypergraph.
A \emph{fractional edge cover} for $H$ is a mapping $\fecm : E(H)
\rightarrow \R$ such that $\sum_{e \in E(H) \land v \in e}\fecm(e)\geq
1$ for every $v \in V(H)$. The \emph{weight} of $\fecm$, denoted by
$\weight(\fecm)$, is the number $\sum_{e \in E(H)}\fecm(e)$. The
\emph{fractional edge cover number} of $H$, denoted by $\fecn(H)$, is
the smallest weight of any fractional edge cover of~$H$.

A \emph{fractional hypertree decomposition} $\TTT$ of $H$ is a triple
$\TTT=(T,(B_t)_{t \in V(T)},$ $(\fecm_t)_{t \in V(T)})$, where $(T,(B_t)_{t
  \in V(T)})$ is a tree decomposition~\cite{RobertsonS86,DowneyFellows13} of $H$ and $(\fecm_t)_{t \in
  V(T)}$ is a family of mappings from $E(H)$ to $\R$ such that for
every $t \in V(T)$, it holds that $\fecm_t$ is a fractional edge cover
for $H[B_t]$.
We call the sets $B_t$ the \emph{bags} and
the mappings $\fecm_t$ the \emph{fractional guards} of the
decomposition.
The \emph{width} of $\TTT$ is the maximum $\weight(\gamma_t)$ over all
$t \in V(T)$. The \emph{fractional hypertree width} of $H$, denoted by $\fhtw(H)$, is the
minimum width of any fractional hypertree decomposition of
$H$. Finally, the \emph{fractional hypertree width} of a CSP instance $\cspi$,
denoted by $\fhtw(\cspi)$, is equal to $\fhtw(H(\cspi))$.

\begin{PRO}
  \label{pro:fhtw-bag-sol}
  \sloppypar
  Let $\cspi$ be a CSP instance with hypergraph $H$ and let
  $\TTT=(T,(B_t)_{t \in V(T)},(\fecm_t)_{t \in V(T)})$ be a 
  fractional hypertree decomposition of $H$ of width at most $\omega$. For every node $t\in V(T)$ 
  and every subset $B \subseteq B_t$, it holds that
  $|\SOL(\cspi[B])|\leq (\noftup(\cspi))^{\omega}$.
\end{PRO}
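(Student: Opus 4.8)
The plan is to derive the bound from the \emph{fractional edge cover size bound} (the bound of Atserias, Grohe, and Marx that underlies the Grohe--Marx tractability result~\cite{GroheMarx14}): if $\fecm$ is a nonnegative fractional edge cover of the hypergraph whose hyperedges are the scopes of a family of relations $R_1,\dots,R_k$, then the natural join $R_1\join\cdots\join R_k$ contains at most $\prod_{i}|R_i|^{\fecm(i)}$ tuples. First I would set up the relevant objects. Fix a node $t\in V(T)$ and a set $B\subseteq B_t$, and write $m=\noftup(\cspi)$. By definition $\cspi[B]=\tuple{B,D,\SB\proj{c}{B}\SM c\in C\SE}$, and since joins and projections preserve the solution relation (as recalled in Section~\ref{sec:pre}) we have $\SOL(\cspi[B])=\join_{c\in C}\proj{c}{B}$. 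Its hypergraph is exactly $H[B]$: the vertex set is $B$ and the edge set is $\SB S(c)\cap B\SM c\in C\SE=\SB e\cap B\SM e\in E(H)\SE$. Moreover each factor $\proj{c}{B}$ has at most $|c|\le m$ tuples, since projection never increases the number of tuples.

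Next I would observe that the fractional guard $\fecm_t$ at $t$, read as a map on $E(H)$, is itself a fractional edge cover of $H[B]$ of weight at most $\omega$. This follows purely from $B\subseteq B_t$: for every $v\in B$ we have $v\in B_t$, and since $v\in e\cap B$ exactly when $v\in e$ for such $v$, the covering inequality $\sum_{e\ni v}\fecm_t(e)\ge 1$ guaranteed for $H[B_t]$ holds verbatim over $H[B]$. Passing from $B_t$ to the subset $B$ only drops covering constraints and leaves the values of $\fecm_t$ untouched, so the weight $\weight(\fecm_t)\le\omega$ is unchanged.

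Finally I would apply the size bound to $\join_{c\in C}\proj{c}{B}$ with the cover $\fecm_t$ on $H[B]$. Write $R_e=\proj{c}{B}$ for an arbitrarily chosen constraint $c$ with $S(c)=e$; since dropping all but one projected constraint per edge can only enlarge the join, and since every factor has at most $m$ tuples while all guard values are nonnegative, the size bound yields
$$|\SOL(\cspi[B])|\;\le\;\Big|\join_{e\in E(H)}R_e\Big|\;\le\;\prod_{e\in E(H)}|R_e|^{\fecm_t(e)}\;\le\;m^{\sum_{e\in E(H)}\fecm_t(e)}\;=\;m^{\weight(\fecm_t)}\;\le\;m^{\omega}\;=\;(\noftup(\cspi))^{\omega}.$$
(Edges with $e\cap B=\emptyset$ correspond to tautological constraints of size one, contributing a factor $1$, so they cause no harm.) The only genuine content is the invocation of the size bound; the one point demanding care is that the guard $\fecm_t$ survives restriction from $B_t$ to $B$ as a valid cover of the same weight, which I expect to be the main---though modest---obstacle.
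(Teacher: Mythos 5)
Your proposal is correct and follows essentially the same route as the paper's proof: both rest on the observation that $\fecm_t$ remains a fractional edge cover of $H[B]$ of weight at most $\omega$ when restricting from $B_t$ to $B\subseteq B_t$ (hereditarity), followed by the Grohe--Marx bound of $(\noftup)^{\fecn}$ on the number of solutions. The only difference is presentational: the paper invokes~\cite[Lemma 3]{GroheMarx14} as a black box (noting the stronger $\noftup$-based bound follows from its proof), whereas you unfold that citation into a direct application of the underlying Atserias--Grohe--Marx join-size inequality, handling the per-edge relation choice and empty-intersection edges explicitly.
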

\lv{
\begin{proof}
 It follows from the definition of fractional hypertree width that
 $\fecm_t$ is a fractional edge cover for $H[B_t]$ of width at most
 $\omega$ for every $t \in V(T)$. Since fractional edge covers are heriditary, $\fecm_t$ is
 also a fractional edge cover for $H[B]$ of width at most $\omega$
 for every $B \subseteq B_t$. Moreover, it is shown in~\cite[Lemma
 3]{GroheMarx14}, that any CSP instance $\cspi'$ has at most
 $(\noftup(\cspi'))^{\fecn(H(\cspi'))}$ solutions;
 note that~\cite[Lemma 3]{GroheMarx14} actually only states that
 $\cspi'$ has at most $(|\cspi'|)^{\fecn(H(\cspi'))}$ solutions,
 however, the slightly stronger bound of
 $(\noftup(\cspi'))^{\fecn(H(\cspi'))}$ follows immediately from the
 proof of~\cite[Lemma 3]{GroheMarx14}. Hence, because
 $\fecn(\cspi[B])\leq \omega$, we obtain that $\cspi[B]$ has at most
 $(\noftup(\cspi))^{\omega}$ solutions.
\qed \end{proof}
}

\section{Join Decompositions and Joinwidth}
\label{sec:jw}

This section introduces two notions that are central to our contribution: \emph{join decompositions} and \emph{joinwidth}. 
In the following, let us consider an arbitrary CSP instance $\cspi=\tuple{V,D,C}$.

\begin{DEF}\label{def:jointree}
  A \emph{join decomposition} for $\cspi$ is a pair $(J,\jtbi)$, where $J$ is a rooted binary
  tree and $\jtbi$ is a bijection between the leaves $L(J)$ of $J$ and $C$.
\end{DEF}
Let $j$ be a node of $J$. We denote by $J_j$ the subtree of $J$ rooted at $j$
and we denote by $X(j)$, $V(j)$, $\overline{V}(j)$, and $S(j)$ the (unordered) sets
$\SB \jtbi(\ell) \SM \ell\in L(J_j)\SE$, $\bigcup_{c \in
  X(j)}S(c)$, $\bigcup_{c \not \in X(j)}S(c)$, and $V(j)\cap\overline{V}(j)$, respectively; infuitively, $X(j)$ is the set of constraints that occur in the subtree rooted at $j$, $V(j)$ is the set of variables that occur in the scope of constraints in $X(j)$, $\overline{V}(j)$ is the set of variables that occur in the scope of constraints not in $X(j)$, and $S(j)$ is the set of variables that occur in $V(j)$ and $\overline{V}(j)$.
  In some cases, we will also consider \emph{linear join decompositions}, which are join decompositions 
where every inner node is
adjacent to at least one leaf. 

\paragraph{Semantics of Join Decompositions.}
Intuitively, every internal node of a join decomposition represents a join operation that is carried out over the constraints obtained for the two children; in this way, a join decomposition can be seen as a procedure for performing joins, with the aim of determining whether $\SOL(\cspi)$ is non-empty (i.e., solving the CSP instance~$\cspi$). Crucially, the running time of such a procedure depends on the size of the constraints obtained and stored by the algorithm which performs such joins. The aim of this subsection is to formally define and substantiate an algorithmic procedure which uses join decompositions to solve CSP.

\lv{
\paragraph{The Naive Approach.} }
A naive way of implementing the above idea would be to simply compute
and store the natural join at each node of the join decomposition and
proceed up to the root; see for instance the work of \cite{AtseriasGroheMarx13}. Formally, we can recursively define a constraint
$C_{\emph{naive}}(j)$ for every node $j \in V(J)$ as follows. 
If $j$ is a leaf, then
$C_{\emph{naive}}(j)=\jtbi(j)$. Otherwise $C_\emph{naive}(j)$ is equal to
$C_{\emph{naive}}(j_1) \join C_{\emph{naive}}(j_2)$, where $j_1$ and $j_2$ are the two
children of $j$ in $J$. 
It is easy to see that this approach can create large constraints even for very simple instances of CSP: for example, at the root $r$ of $T$ it holds that $\SOL(\cspi)=C_{\emph{naive}}(r)$, and hence $C_{\emph{naive}}(r)$ would have superpolynomial size for every instance of CSP with a superpolynomial number of solutions. In particular, an algorithm which computes and stores $C_{\emph{naive}}(j)$ would never run in polynomial time for CSP instances with a superpolynomial number of solutions.

\lv{\paragraph{A Better Approach} 
A more effective way}
\sv{An efficient way} of joining constraints along a join
decomposition is to only store projections of constraints onto those variables that are still relevant for constraints which have yet to appear; this idea has been used, e.g., in algorithms which exploit hypertree width~\cite{GottlobLeoneScarcello02}. To formalize this, let $C_{\emph{proj}}(j)$ be recursively defined for every node $j \in V(T)$ as follows. 
If $j$ is a leaf, then
$C_{\emph{proj}}(j)=\proj{\jtbi(j)}{\overline{V}(j)}$. Otherwise $C_\emph{proj}(j)$ is equal to
$\big(\proj{C_{\emph{proj}}(j_1) \join C_{\emph{proj}}(j_2)\big)}{\overline{V}(j)}$, where $j_1$ and $j_2$ are the two children of $j$ in $J$. In this case, $\cspi$ is a YES-instance if and only if $C_{\emph{proj}}(r)$ does not contain the empty relation. Clearly, for every node $j$ of $J$ it holds that $C_{\emph{proj}}(j)$ has at most as many tuples as $C_{\emph{naive}}(j)$, but can have arbitrarily fewer tuples; in particular, an algorithm which uses join decompositions to compute $C_{\emph{proj}}$ in a bottom-up fashion can solve CSP instances in polynomial time even if they have a superpolynomial number of solutions (see also Observation~\ref{obs:compnaive}).

However, the above approach still does not capture the algorithmic power offered by dynamically computing joins along a join decomposition. In particular, 
similarly as has been done in the evaluation algorithm 
for fractional edge cover~\cite[Theorem
3.5]{GroheMarx14}, we can further reduce the size of each constraint $C_{\emph{proj}}(j)$ computed in the above procedure by \emph{pruning} all tuples that would immediately violate a constraint $c$ in $\cspi$ (and, in particular, in $C\setminus C(j)$). 
To formalize this operation, we let $\PR(c)$ denote the \emph{pruned constraint} w.r.t.\ $\cspi$,
i.e., $\PR(c)$ is obtained from $c$ by 
removing all tuples $t\in R(c)$ such that there is a constraint $c' \in C$ with
$t[S(c')] \notin \proj{c'}{S(c)}$.
This leads us to our final notion of dynamically computed constraints: for a node $j$, we let $C(j)=\PR(C_{\emph{proj}}(j))$. 
We note that this, perhaps inconspicuous, notion of pruning is in fact critical---without it, one cannot use join decompositions to efficiently solve instances of small fractional hypertree width or even small fractional edge cover. A more in-depth discussion on this topic is provided in Section~\ref{sec:justification}.

We can now proceed to formally define the considered width measures. 
\begin{DEF}\label{def:width}
Let $\JJJ=(J,\jtbi)$ be a join decomposition for $\cspi$ and let $j \in V(J)$.
The \emph{joinwidth} of $j$, denoted $\jw(j)$, 
is the smallest real number
$\omega$ such that $|C(j)| \leq (\noftup(\cspi))^\omega$, i.e., $\omega=\log_{\noftup(\cspi)}|C(j)|$.
The joinwidth of $\JJJ$ (denoted $\jw(\JJJ)$) is then the maximum
$\jw(j)$ over all $j \in V(J)$. Finally, the joinwidth of $\cspi$
(denoted $\jw(\cspi)$) is the minimum $\jw(\JJJ)$ over all join decompositions
$\JJJ$ for $\cspi$.
\end{DEF}
  In general terms, an instance $\cspi$ has joinwidth $\omega$ if it admits a join decomposition where the number of tuples of the produced constraints never increases beyond the $\omega$-th power of the size of the largest relation in $\cspi$. Analogously as above, we denote by $\jwl(\cspi)$ the minimum joinwidth of any linear join decomposition of a CSP instance $\cspi$. 

\begin{EXL}\label{ex:join}
  Let $N\in \mathbb{N}$ and consider the CSP instance $\cspi$ having three variables $a$, $b$,
  and $c$ and three constraints $x$, $y$, and $z$ with scopes $(a,b)$,
  $(b,c)$, and $(a,c)$, respectively. Assume furthermore that the relations of all
  three constraints are identical and contain all tuples $(1,i)$ and
  $(i,1)$ for every $i \in [N]$. Refer also to
  Figure~\ref{fig:join-example} for an illustration of the example.
  Then $|x|=|y|=|z|=\noftup(\cspi)=2N-1$
  and due to the symmetry of $\cspi$ any join-tree $\JJJ$ of $\cspi$
  has the same joinwidth, which (as we will show) is equal to $1$.
  To see this consider for instance the join-tree $\JJJ$ that
  has one inner node $j$ joining $x$ and $y$ and a root node $r$
  joining $C(j)$ and $z$. Then $\jw(\ell)=1$ for any leaf node $\ell$ of
  $\JJJ$. Moreover $|C(j)|=|\PR(C_{\emph{proj}}(j))|=|z|=\noftup(\cspi)$ since the pruning
  step removes all tuples from $C_{\emph{proj}}(j)$ that are not in
  $z$ and consequently $C(r)=z$ and $\jw(\JJJ)=1$. Note that in this example $\jw(\cspi)=1< \fhtw(\cspi)=3/2$.
\end{EXL}

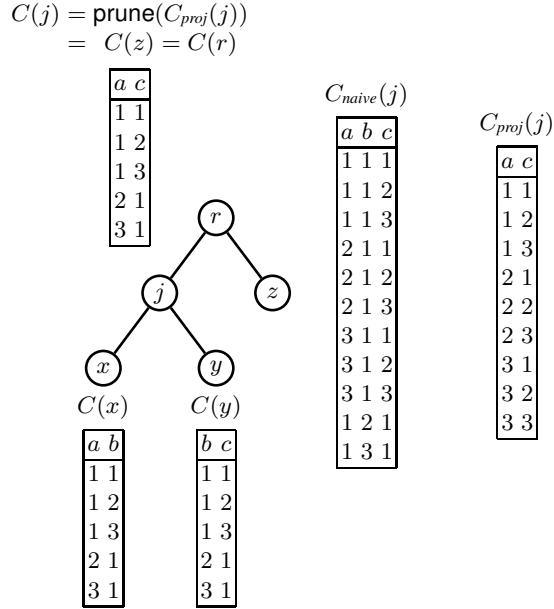
\begin{figure}
  \centering
  \vspace{-0.5cm}
  \begin{tikzpicture}[line width=2pt, node distance=0.75cm]
    \tikzstyle{jtn}=[draw, shape=circle, line width=1pt, minimum size=13pt,inner sep=1pt,align=center]
    \tikzstyle{jne}=[draw, line width=1pt]

    \draw
    node[jtn] (x) {$x$}
    node[right of=x] (xym) {}
    node[jtn, right of=xym] (y) {$y$}

    node[jtn, node distance=1cm, above of=xym] (j) {$j$} 
    node[right of=j] (xyzm) {}
    node[jtn, right of=xyzm] (z) {$z$}
    node[jtn, node distance=1cm, above of=xyzm] (r) {$r$} 
    ;
    
    \draw
    (x) edge[jne] (j)
    (y) edge[jne] (j)

    (j) edge[jne] (r)
    (z) edge[jne] (r)
    ;
    \draw[label distance=-4pt]
    (r) +(-1.5cm,-4cm) node[label=above:$C(x)$] (Rx) {
     \small \begin{tabular}{|cc|}
        \hline
        $a$ & $b$ \\\hline
        $1$ & $1$ \\
        $1$ & $2$ \\
        $1$ & $3$ \\
        $2$ & $1$ \\
        $3$ & $1$ \\\hline
      \end{tabular}
    }
    (r) +(-0cm,-4cm) node[label=above:$C(y)$] (Ry) {
      \small\begin{tabular}{|cc|}
        \hline
        $b$ & $c$ \\\hline
        $1$ & $1$ \\
        $1$ & $2$ \\
        $1$ & $3$ \\
        $2$ & $1$ \\
        $3$ & $1$ \\\hline
      \end{tabular}
    }

    (r) +(2cm,-1cm) node[label=above:$C_{\emph{naive}}(j)$] (RJxy) {
     \small \begin{tabular}{|ccc|}
        \hline
        $a$ & $b$ & $c$\\\hline
        $1$ & $1$ & $1$ \\
        $1$ & $1$ & $2$ \\
        $1$ & $1$ & $3$ \\
        $2$ & $1$ & $1$ \\
        $2$ & $1$ & $2$ \\
        $2$ & $1$ & $3$ \\
        $3$ & $1$ & $1$ \\
        $3$ & $1$ & $2$ \\
        $3$ & $1$ & $3$ \\
        $1$ & $2$ & $1$ \\
        $1$ & $3$ & $1$ \\
        \hline
      \end{tabular}
    }

    (r) +(4cm,-1cm) node[label=above:$C_{\emph{proj}}(j)$] (RPJxy) {
      \small\begin{tabular}{|cc|}
        \hline
        $a$ & $c$\\\hline
        $1$ & $1$ \\
        $1$ & $2$ \\
        $1$ & $3$ \\
        $2$ & $1$ \\
        $2$ & $2$ \\
        $2$ & $3$ \\
        $3$ & $1$ \\
        $3$ & $2$ \\
        $3$ & $3$ \\
        \hline
      \end{tabular}
    }

    (r) +(-1.15cm,0.8cm)
    node[label=above:{ $
      \begin{array}{c@{\;=\;}c}
C(j)&\PR(C_{\emph{proj}}(j))\\&C(z)=C(r)        
      \end{array}$}] (Rxy) 
    {
      \small\begin{tabular}{|cc|}
        \hline
        $a$ & $c$\\\hline
        $1$ & $1$ \\
        $1$ & $2$ \\
        $1$ & $3$ \\
        $2$ & $1$ \\
        $3$ & $1$ \\
        \hline
      \end{tabular}
    }

    ;
  \end{tikzpicture}
  \caption{The join decomposition given in Example~\ref{ex:join} for $N=3$ together with the intermediate constraints
    obtained for the node $j$.
    \vspace{-0.5cm}}
  \label{fig:join-example}
\end{figure}

Finally, we remark that one could in principle also define joinwidth in terms of a (rather tedious and technically involved) variant of hypertree decompositions. However, the inherent algorithmic nature of join-trees makes them much better suited for the definition of joinwidth.

\paragraph{Properties of Join Decompositions.}
%
%
Our first task is to formalize the intuition behind the
constraints $C(j)$ computed when proceeding through the join
tree. 

\begin{LEM}
\label{lem:jtp-part-sol}
  Let $(J,\jtbi)$ be a join decomposition for $\cspi=\tuple{V,D,C}$ and let $j \in
  V(J)$. Then $C(j)=\proj{\SOL(\cspi')}{\overline{V}(j)}$, where $\cspi'=\cspi[V(j)]$.
\end{LEM}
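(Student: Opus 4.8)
The plan is to decouple the two ingredients that define $C(j)$ — the projecting joins that build $C_{\emph{proj}}(j)$ and the final pruning step $\PR$ — and to treat them in turn. Throughout, write $\cspi_j=\tuple{V(j),D,X(j)}$ for the sub-instance consisting of exactly the constraints occurring in the subtree $J_j$, keeping their original scopes and relations; note that every variable of $V(j)$ occurs in some constraint of $X(j)$, so $\cspi_j$ is a legitimate instance with variable set $V(j)$. As a preliminary bookkeeping fact I would record, by a straightforward bottom-up induction using $X(j)=X(j_1)\cup X(j_2)$, that the scope of $C_{\emph{proj}}(j)$ (and hence of $C(j)$) is exactly $S(j)=V(j)\cap\overline{V}(j)$; this lets me treat both sides of the claim as sets of tuples over $S(j)$.

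\emph{Step 1 (projecting joins).} I would prove by induction on $J_j$ the pruning-free identity
$$C_{\emph{proj}}(j)=\proj{\SOL(\cspi_j)}{\overline{V}(j)}.$$
The base case is immediate since $\SOL(\cspi_j)=\jtbi(j)$ at a leaf. For the inductive step I use two relational facts. First, an assignment of $V(j)=V(j_1)\cup V(j_2)$ satisfies all of $X(j)=X(j_1)\cup X(j_2)$ (a disjoint union) iff its two restrictions satisfy $X(j_1)$ and $X(j_2)$ respectively, so $\SOL(\cspi_j)=\SOL(\cspi_{j_1})\join\SOL(\cspi_{j_2})$. Second, the early-projection identity
$$\proj{\proj{\SOL(\cspi_{j_1})}{\overline{V}(j_1)}\join\proj{\SOL(\cspi_{j_2})}{\overline{V}(j_2)}}{\overline{V}(j)}=\proj{\SOL(\cspi_{j_1})\join\SOL(\cspi_{j_2})}{\overline{V}(j)}.$$
The latter holds because any variable $v\in V(j_1)\setminus\overline{V}(j_1)$ is private to $X(j_1)$: occurring in no constraint outside $X(j_1)$, it lies neither in $V(j_2)$ (as $X(j_2)$ is disjoint from $X(j_1)$), so it is irrelevant to the join, nor in $\overline{V}(j)$ (as $C\setminus X(j)\subseteq C\setminus X(j_1)$), so it is projected away at $j$ anyway; symmetrically for $j_2$. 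Chaining the induction hypothesis $C_{\emph{proj}}(j_i)=\proj{\SOL(\cspi_{j_i})}{\overline{V}(j_i)}$ with these two facts and the recursive definition of $C_{\emph{proj}}(j)$ yields the identity for $j$.

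\emph{Step 2 (pruning).} Since $C(j)=\PR(C_{\emph{proj}}(j))$ and $\PR$ only deletes tuples, $C(j)$ is the set of $t\in C_{\emph{proj}}(j)$ with $t[S(c')]\in\proj{c'}{S(j)}$ for all $c'\in C$. I would first show that the conditions from $c'\in X(j)$ are vacuous: by Step 1 every $t\in C_{\emph{proj}}(j)$ equals $\theta[S(j)]$ for some $\theta\in\SOL(\cspi_j)$, and since $\theta$ already satisfies $c'$ we get $t[S(c')]=\theta[S(c')\cap S(j)]\in\proj{c'}{S(j)}$. The crux is the observation that for $c'\notin X(j)$ we have $S(c')\subseteq\overline{V}(j)$, whence $S(c')\cap V(j)=S(c')\cap S(j)$ and therefore $\proj{c'}{V(j)}=\proj{c'}{S(j)}$. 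This matches the surviving-after-pruning conditions exactly with the extra membership conditions $\theta[S(c')\cap V(j)]\in\proj{c'}{V(j)}$ that separate $\SOL(\cspi[V(j)])$ from $\SOL(\cspi_j)$ (recall $\proj{c'}{V(j)}=c'$ for $c'\in X(j)$). Combining with Step 1 gives, as sets of tuples over $S(j)$,
$$C(j)=\proj{\SOL(\cspi_j)}{\overline{V}(j)}\cap\SB t\SM \forall c'\notin X(j),\ t[S(c')]\in\proj{c'}{S(j)}\SE=\proj{\SOL(\cspi[V(j)])}{\overline{V}(j)},$$
the last equality being checked by two inclusions: a surviving $t=\theta[S(j)]$ lifts to the witness $\theta\in\SOL(\cspi[V(j)])$, and conversely every $\theta\in\SOL(\cspi[V(j)])$ both lands in $C_{\emph{proj}}(j)$ and survives pruning.

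The main obstacle I anticipate is Step 2, namely making the pruning line up precisely with the induced instance $\cspi[V(j)]$. The two delicate points are (i) verifying that pruning against the inside constraints $X(j)$ deletes nothing, so that pruning contributes exactly the outside constraints, and (ii) the scope identity $\proj{c'}{S(j)}=\proj{c'}{V(j)}$ for $c'\notin X(j)$, which is what makes ``pruning w.r.t.\ $\cspi$'' (a projection onto $S(j)$) agree with the defining conditions of $\cspi[V(j)]$ (projections onto $V(j)$). By contrast, Step 1 is the standard correctness argument for projecting joins, whose only real subtlety is the early-projection identity, handled by the private-variable observation above.
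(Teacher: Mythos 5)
Your proof is correct, but it takes a genuinely different route from the paper's. The paper proves the identity by a single leaf-to-root induction whose invariant is the statement itself, with pruning applied at every node of the tree (i.e., it works with the recursively computed constraints $\PR(\proj{C(j_1)\join C(j_2)}{\overline{V}(j)})$, which is what the algorithm of Theorem~\ref{the:jt} actually builds), and it establishes each inclusion separately by contradiction. You instead decouple the two mechanisms: first the pruning-free identity $C_{\emph{proj}}(j)=\proj{\SOL(\cspi_j)}{\overline{V}(j)}$, where $\cspi_j$ keeps only the constraints of the subtree (not the induced instance $\cspi[V(j)]$), proved via the standard push-projections-into-joins identity justified by your ``private variable'' observation; and then a single terminal pruning step, shown to be vacuous against the inside constraints $X(j)$ and to coincide exactly, via the scope identity $\proj{c'}{V(j)}=\proj{c'}{S(j)}$ for $c'\notin X(j)$, with the extra constraints that distinguish $\cspi[V(j)]$ from $\cspi_j$. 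Both arguments are sound, and each buys something the other does not: yours matches the paper's literal definition $C(j)=\PR(C_{\emph{proj}}(j))$ and isolates cleanly what pruning contributes (precisely the projections of the outside constraints), with equality chains rather than contradiction arguments; the paper's version is the one needed verbatim for the running-time analysis of Theorem~\ref{the:jt}, since the algorithm prunes at every node rather than once at the end. In fact, the two proofs together establish a fact the paper leaves implicit, namely that the ``prune once at the root of the subtree'' and ``prune at every node'' readings of $C(j)$ define the same constraint, as both equal $\proj{\SOL(\cspi[V(j)])}{\overline{V}(j)}$; if you wanted your argument to substitute for the paper's in full, you would add the short monotonicity/witness argument showing the recursively pruned constraints also satisfy the identity.
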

\begin{proof}
  We prove the lemma by leaf-to-root induction along $J$. If $j$ is a leaf such that $\jtbi(j)=c$, then $C(j)$ is the constraint obtained from $c$ by projecting onto $\overline{V}(j)$ and then applying pruning with respect to $\cspi$. 
  Crucially, pruning $c$ w.r.t.\ $\cspi'$ produces the same result as pruning $c$ w.r.t. $\cspi$.
Since pruning cannot remove tuples which occur in $\SOL(\cspi')$, each tuple in $\SOL(\cspi')$ must also occur in $C(j)$ (as a projection onto $\overline{V}(j)$). On the other hand, consider a tuple $\alpha$ in $C(j)$ and assume for a contradiction that $\alpha$ is not present in $\proj{\SOL(\cspi')}{\overline{V}(j)}$. Since variables outside of $\overline{V}(j)$ do not occur in the scopes of constraints other than $c$, this means that there would exist a constraint $c'$ in $\cspi'$ which is not satisfied by an assignment corresponding to $\alpha$---but in that case $\alpha$ would be removed from $C(j)$ via pruning. Hence $C(j)=\proj{\SOL(\cspi')}{\overline{V}(j)}$ holds for every leaf in $T$.

For the induction step, consider a node $j$ with children $j_1$ and $j_2$ (with their corresponding instances being $\cspi'_1$ and $\cspi'_2$, respectively), and recall that $C(j)$ is obtained from $C(j_1)\join C(j_2)$ by projecting onto $\overline V(j)$ and then pruning (w.r.t.\ $\cspi$ or, equivalently, w.r.t.\ $\cspi'$). We will also implicitly use the fact that $\overline V(j)\subseteq \overline V(j_1)\cup \overline V(j_2)$ and $V(j)= V(j_1) \cup V(j_2)$. First, consider for a contradiction that there exists a tuple $\beta$ in $\SOL(\cspi'[\overline V(j)])$ which does not occur in $C(j)$. Clearly, $\beta$ could not have been removed by pruning, and hence this would mean that there exists no tuple in $C(j_1)\join C(j_2)$ which results in $\beta$ after projection onto $\overline V(j)$; in particular, w.l.o.g.\ we may assume that every tuple in $C(j_1)$ differs from $\beta$ in (the assignment of) at least one variable. However, since $\beta$ occurs in $\SOL(\cspi'[\overline V(j)])$, there must exist at least one tuple, say $\beta'$, which occurs in $\SOL(\cspi')$, and consequently there exists a tuple in $\SOL(\cspi_1')$ which matches $\beta$ in (the assignment of) all variables. At this point, we have reached a contradiction with the inductive assumption that $\SOL(\cspi_1'[\overline V(j_1)])=C(j_1)$.

For the final case, consider a tuple $\gamma$ in $C(j)$ and assume for a contradiction that $\gamma$ is not present in $\proj{\SOL(\cspi')}{\overline{V}(j)}$. This means that there exists at least one constraint, say $c'$, in $\proj{\SOL(\cspi')}{\overline{V}(j)}$ which would be invalidated by (an assignment corresponding to) $\gamma$. Let us assume that $c'$ occurs in the subtree rooted in $j_2$, and let $\gamma_1$ be an arbitrary ``projection'' of $\gamma$ onto $\overline V(j_1)$. Since $\overline V(j_1)\supseteq S(c')$, this means that $\gamma_1$ would have been removed from $C(j_1)$ by pruning; in particular, we see that there exists no tuple $\gamma_1$ in $C(j_1)$ which could produce $\gamma$ in a join, contradicting our assumptions about $\gamma$. By putting everything together, we conclude that indeed $C(j)=\proj{\SOL(\cspi')}{\overline{V}(j)}$.
\qed \end{proof}

Next, we show how join decompositions can be used to solve \CSP{}.
\begin{THE}
\label{the:jt}
  \CSP{} can be solved in time $\bigoh(|\cspi|^{2\omega+4})$ provided that a
  join decomposition of width at most $\omega$ is given in the input.
\end{THE}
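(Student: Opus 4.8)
The plan is to compute the constraints $C(j)$ bottom-up along $J$ and read off the answer at the root. By Lemma~\ref{lem:jtp-part-sol}, at the root $r$ we have $V(r)=V$ and $\overline{V}(r)=\emptyset$ (every constraint lies in $X(r)$), so $C(r)=\proj{\SOL(\cspi)}{\emptyset}$; this relation is non-empty — in which case it consists exactly of the empty tuple — if and only if $\SOL(\cspi)\neq\emptyset$. Hence it suffices to decide whether $C(r)$ is non-empty, and for this we only need to be able to compute each $C(j)$.

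The crucial point for efficiency is that I would \emph{not} compute $C_{\emph{proj}}(j)$ directly from its recursive definition, since the intermediate relations $C_{\emph{proj}}$ are not controlled by the width and can be superpolynomially large. Instead, I would compute $C(j)$ from the \emph{already pruned} constraints of its children via the recurrence
\[
  C(j)=\PR\big(\proj{\big(C(j_1)\join C(j_2)\big)}{\overline{V}(j)}\big)
\]
(and $C(j)=\PR(\proj{\jtbi(j)}{\overline{V}(j)})$ at a leaf), where $j_1,j_2$ are the children of $j$. That this recurrence indeed yields $C(j)=\proj{\SOL(\cspi[V(j)])}{\overline{V}(j)}$ is precisely what the induction step in the proof of Lemma~\ref{lem:jtp-part-sol} establishes: replacing $C_{\emph{proj}}(j_i)$ by the smaller pruned relation $C(j_i)=\PR(C_{\emph{proj}}(j_i))$ at the children does not change the result after joining, projecting onto $\overline{V}(j)$, and pruning.

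For the running time, set $M=(\noftup(\cspi))^{\omega}\le|\cspi|^{\omega}$. Since the given decomposition has width at most $\omega$, every stored constraint satisfies $|C(j)|\le M$. Assuming w.l.o.g.\ that no relation is empty (an empty relation makes $\cspi$ trivially unsatisfiable and is detected immediately), we have $|C|\le|\cspi|$, so the binary tree $J$ has $|C|$ leaves and thus $\bigoh(|\cspi|)$ nodes. At an internal node $j$, the intermediate join $C(j_1)\join C(j_2)$ has at most $|C(j_1)|\cdot|C(j_2)|\le M^2$ tuples, each of arity at most $|V|\le|\cspi|$, and computing it together with the projection onto $\overline{V}(j)$ costs $\bigoh(M^2|\cspi|)$. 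The dominant step is pruning the resulting relation $P$ (with $|P|\le M^2$): for each of its $\le M^2$ tuples $t$ and each of the $\le|\cspi|$ constraints $c'\in C$, we test whether $t$ agrees on the shared variables with some tuple of $c'$ by scanning $R(c')$ (at most $\noftup(\cspi)\le|\cspi|$ tuples, each comparison costing $\bigoh(|\cspi|)$). This takes $\bigoh(M^2\cdot|\cspi|\cdot|\cspi|^2)=\bigoh(|\cspi|^{2\omega+3})$ per node, and summing over the $\bigoh(|\cspi|)$ nodes gives the claimed $\bigoh(|\cspi|^{2\omega+4})$.

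The main obstacle is the correctness argument behind the second paragraph: one must be sure that the pruned children $C(j_1),C(j_2)$ still retain enough tuples to reconstruct every tuple of $C(j)$ through the join, so that performing the join and pruning only on bounded-size relations is faithful to the definition — this is exactly where Lemma~\ref{lem:jtp-part-sol} is indispensable. The remaining care is bookkeeping: pruning is both the costliest operation and the source of the extra $|\cspi|^{3}$ factor, and one has to check that it can be carried out within the stated time using only the primitive relational operations already introduced.
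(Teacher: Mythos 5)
Your proposal is correct and follows essentially the same route as the paper's proof: a bottom-up computation of the pruned constraints $C(j)$ via the recurrence $C(j)=\PR(\proj{C(j_1)\join C(j_2)}{S(j)})$ (the paper projects onto $S(j)$, which coincides with your projection onto $\overline{V}(j)$ here), with correctness resting on Lemma~\ref{lem:jtp-part-sol} and the same per-node cost analysis of $\bigoh(|\cspi|^{2\omega+3})$ over $2|C|-1$ nodes. Your explicit remark that one must join the already-pruned children rather than unwind the definition of $C_{\emph{proj}}$ is a point the paper glosses over but handles identically in substance.
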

\begin{proof}
  Let $\JJJ=(J,\jtbi)$ be the provided join decomposition of width $\omega$ for
  $\cspi$. As noted before, the algorithm for solving $\cspi$ computes
  $C(j)$ for every $j \in V(J)$ in a bottom-up manner. Since $J$ has
  exactly $2|C|-1$ nodes, it remains to analyse the maximum time
  required to compute $C(j)$ for any node of $J$. If $j$ is a leaf,
  then $C(j)=\PR(\proj{\jtbi(j)}{S(j)})$ and since the time required to
  compute the projection $P=\proj{\jtbi(j)}{S(j)}$ from $\jtbi(j)$ is at most $\bigoh(\noftup(\cspi)|S(j)|)$
  and the time required to compute the pruned constraint
  $\PR(P)$ from $P$ is at most $\bigoh(|C|(\noftup(\cspi))^2|S(j)|)$,
  we obtain that $C(j)$ can be computed in time
  $\bigoh(|C|(\noftup(\cspi))^2|S(j)|)\in \bigoh(|\cspi|^4)$.
  Moreover, if $j$ is an inner node with children $j_1$ and $j_2$, then
  $C(j)=\PR(\proj{C(j_1)\join C(j_2)}{S(j)})$ and since we require
  at most $\bigoh((\noftup(\cspi))^{2\omega}|S(j_1)\cup S(j_2)|)$ time
  to compute the join $Q=C(j_1)\join C(j_2)$ from $C(j_1)$
  and $C(j_2)$, at most $\bigoh((\noftup(\cspi))^{2\omega}|S(j)|)$
  time to compute the projection $P=\proj{Q}{S(j)}$ from $Q$, and at
  most $\bigoh(|C|(\noftup(\cspi))^{2\omega+1}\cdot |S(j)|)$ time to compute the
  pruned constraint $\PR(P)$ from $P$, we obtain
  $\bigoh(|C|(\noftup(\cspi))^{2\omega+1}\cdot |S(j_1)\cup S(j_2)|)=\bigoh(|\cspi|^{2\omega+3})$ as the
  total time required to compute $C(j)$. Multiplying the time required
  to compute $C(j)$ for an inner node $j\in V(J)$ with the number of
  nodes of $T$ yields the running time stated in the lemma.
\qed \end{proof}

\paragraph{Computing Join Decompositions.}
Next, let us address the problem of
computing join decompositions of bounded joinwidth, formalized as follows.
\newcommand{\COMPJT}{\textsc{Join Decomposition}}

\defproblem{$\omega$-\COMPJT}{A CSP instance $\cspi$.}{Compute a join decomposition for $\cspi$ of width at most $\omega$, or correctly determine that $\jw(\cspi)>\omega$.}

%
%
We show that $\omega$-\COMPJT\ is \NP{}-hard even for width $\omega=1$. This is similar to fractional hypertree width, where it was only
very recently shown that deciding whether $\fhtw(\cspi)\leq 2$ is
\NP{}-hard~\cite{FischlGottlobPichler18}, settling a question which had been
open for about a decade.
Our proof is, however, entirely different from
the corresponding hardness proof for fractional hypertree width and
uses a reduction from the \NP{}-complete \textsc{Branchwidth} problem~\cite{SeymourT94}.


\begin{THE}
\label{the:comp-jt-np}
  $1$-\COMPJT{} is \NP-hard, even on Boolean CSP instances.
\end{THE}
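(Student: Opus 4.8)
The plan is to reduce from the NP\hy complete \textsc{Branchwidth} problem on graphs~\cite{SeymourT94}: given a graph $G$ and an integer $k$, decide whether $\bw(G)\le k$. The starting observation is that a join decomposition $(J,\jtbi)$ of a CSP instance $\cspi$ is, as a combinatorial object, nothing but a branch decomposition of the hypergraph $H(\cspi)$ (leaves $\leftrightarrow$ constraints $\leftrightarrow$ hyperedges). Moreover, for a node $j$ the set $S(j)=V(j)\cap\overline{V}(j)$ is precisely the set $\guards_{H(\cspi)}(X(j))$ of cut vertices of the cut $(X(j),C\setminus X(j))$, so $\max_{j\in V(J)}|S(j)|$ equals the branchwidth of $(J,\jtbi)$ read as a branch decomposition. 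Hence if we build a Boolean instance $\cspi$ whose hypergraph $H(\cspi)$ is essentially $G$, join decompositions of $\cspi$ correspond to branch decompositions of $G$, and the quantity we must control at each node is exactly $|S(j)|$.

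By Lemma~\ref{lem:jtp-part-sol} the constraint $C(j)$ has scope $S(j)$ and equals $\proj{\SOL(\cspi[V(j)])}{S(j)}$, so $\jw(j)=\log_{\noftup(\cspi)}|C(j)|$ is governed entirely by $|S(j)|$ together with the chosen relations. The aim is then to engineer the relations so that $|C(j)|$ is a fixed, strictly increasing, \emph{polynomially bounded} function of $|S(j)|$; the cleanest target is $|C(j)|=|S(j)|+1$, which is the projection behaviour of the ``at most one true'' relation (whose projection onto any $s$ coordinates has precisely $s+1$ tuples). Padding the instance with a single extra constraint on private variables that pins $\noftup(\cspi)=k+1$ then gives $\jw(j)\le 1 \iff |S(j)|\le k$ at every node, and therefore $\jw(\cspi)\le 1 \iff \bw(G)\le k$. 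Since $|S(j)|+1\le |V(G)|+1$ and $k\le |V(G)|$, all relations remain polynomial in $|G|$, the reduction runs in polynomial time, and any algorithm for $1$\hy\COMPJT{} would decide \textsc{Branchwidth}.

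The main obstacle is realizing this projection profile with Boolean relations while preserving the branchwidth correspondence with $G$, and the tension here is genuine. Full relations on the edges of $G$ give $|C(j)|=2^{|S(j)|}$ and would force the unaffordable normalization $\noftup(\cspi)=2^{k}$; yet the natural polynomial-size alternative, ``not both true'' on each edge, makes $\SOL(\cspi)$ the family of independent sets of $G$, so that $|C(j)|$ is the number of independent sets of $G[S(j)]$ and depends on the density of $G[S(j)]$ rather than on $|S(j)|$ alone. Because the target profile $|S(j)|+1$ is a \emph{global} cardinality behaviour, it cannot be produced by constraints that are purely local to the arbitrary, sparse edges of $G$; obtaining it requires a dedicated gadget that induces the desired monotone profile without perturbing the cut sizes $|S(j)|$, so that $\max_j|S(j)|$ still equals $\bw(G)$. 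I expect this gadget design, together with the verification of the cut-size identity, to be the crux of the argument. Once it is in place, correctness is just the routine pair of monotonicity directions: a branch decomposition of $G$ of width $\le k$ yields a join decomposition with every $|C(j)|\le k+1$, and conversely any join decomposition of width $\le 1$ has all cuts bounded by $k$ and hence certifies $\bw(G)\le k$.
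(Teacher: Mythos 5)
You have the right reduction source (\textsc{Branchwidth}), the right structural correspondence (leaves $\leftrightarrow$ constraints, $S(j)=\guards_{H(\cspi)}(X(j))$), the right normalization trick (a padding constraint to pin $\noftup(\cspi)$), and even the right target profile ($|C(j)|$ linear in $|S(j)|$). But the proof has a genuine gap exactly where you say it does: the gadget realizing that profile is never constructed, only postulated. This is not a routine detail---it is the entire content of the theorem, since without it neither direction of the equivalence $\jw(\cspi)\le 1 \iff \bw(G)\le k$ can be verified. Your diagnosis that purely edge-local relations cannot work is correct (complete relations give $2^{|S(j)|}$, ``not both true'' gives independent-set counts of $G[S(j)]$), but note that the obvious global fix---a single ``at most one true'' constraint over all variables---also fails, and for a different reason: a constraint whose scope is all of $V$ forces $S(j)=V(j)$ at every node $j$ not above it, destroying the cut-size identity and in fact making every node have width at most $1$ regardless of $\bw(G)$.

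The paper's resolution is a middle ground between local and global that you did not find: a single shared \emph{selector} variable $a$ with domain $[n]$ is added to the scope of every constraint, so that $c_e$ for $e=\{v_i,v_j\}$ has scope $(a,v_i,v_j)$ and relation consisting of all tuples where $t[a]\in[n]$, and $t[v_l]\in\{1,2\}$ if $t[a]=l$ (for $l\in\{i,j\}$) while $t[v_l]=1$ otherwise. Scopes remain essentially local (edge endpoints plus $a$), so cuts are preserved: $S(j)=\guards_G(X(j))\cup\{a\}$. Yet the relations are globally coupled through $a$: at any node $j$, the value of $a$ designates the unique variable allowed to deviate from the default value $1$, so $|C(j)|=n+|\guards_G(X(j))|$ exactly---the linear profile you wanted, shifted by $n$. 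Padding with a unary complete constraint on a fresh variable with domain size $n+\omega$ pins $\noftup(\cspi)=n+\omega$, giving $\jw(\cspi)\le 1 \iff \bw(G)\le\omega$, and Booleanization is handled by splitting $a$ and the padding variable into logarithmically many binary variables. Your proposal would need this construction (or an equivalent one) supplied and verified against the pruning semantics of $C(j)$ before it could be called a proof.
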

\lv{
\begin{proof}
  We will reduce from the well-established \NP{}-complete \textsc{Branchwidth} problem, which given a
  graph $G$ and an integer $\omega$ asks, whether $G$ has branchwidth
  at most $\omega$. Let $(G,\omega)$ be an instance of the
  \textsc{Branchwidth} problem. We will first show how to construct a
  ternary CSP instance
  $\cspi=(V,D,C)$ such that $\bw(G)\leq \omega$ if and only if
  $\jw(\cspi)\leq \log_{|V(G)|+2}|V(G)|+\omega\leq 2$; we
  will then later explain how to adapt the construction of $\cspi$ to
  make its domain boolean and to reduce the required joinwidth
  to $1$. Let $V(G)=\{v_1,\dotsc,v_n\}$. We set:
  \begin{itemize}
  \item $V=V(G) \cup \{a\}$,
  \item $D=\{1,\cdots,n\}$,
  \item for every $e=\{v_i,v_j\} \in E(G)$, $\cspi$ has a constraint
    $c_e$ with scope $(a,v_i,v_j)$ whose relation $R(c_e)$
    contains every tuple $t$ such that:
    \begin{itemize}
    \item $t[a] \in \{1,\cdots,n\}$,
    \item for $l \in \{i,j\}$, it holds that $t[v_l]\in \{1,2\}$
      if $t[a]=l$ and $t[v_l]=1$ otherwise.
    \end{itemize}
  \end{itemize}
  Note that $\noftup(\cspi)=n+2$. Since $\cspi$ can clearly be
  constructed in polynomial time it only
  remains to show that $\bw(G)\leq \omega$ if and only if
  $\jw(\cspi)\leq \log_{n+2}(n+\omega)\leq 2$.

  
  Towards showing the forward direction, let $\BBB=(B,\bdbi)$ be a branch
  decomposition for $G$ of width at most $\omega$. We claim that
  $\JJJ=(B,\jtbi)$, where $\jtbi(l)=c_{\bdbi(l)}$ for every $l \in
  L(B)$, is a join decomposition for $\cspi$ of width at most $\log_{n+2}(n+\omega)$. Consider any
  node $j\in V(B)$, then $S(j)=\guards(B_j)\cup \{a\}$, moreover, the
  constraint $C(j)$ contains every tuple $t$ such that:
  \begin{itemize}
  \item $t[a] \in \{1,\cdots,n\}$,
  \item for every $v_i \in S(j)\setminus \{a\}$, it holds that $t[v_i]\in \{1,2\}$
    if $t[a]=i$ and $t[v_i]=1$ otherwise.
  \end{itemize}
  Hence
  $\jw(j)=\log_{\noftup(\cspi)}|C(j)|=\log_{n+2}(n+|\guards(j)|)\leq
  \log_{n+2}(n+\omega)$. Since this holds for every $j \in V(J)$, we obtain
  that $\jw(\JJJ)=\log_{n+2}(n+\omega)$, as required.
  
  Towards showing the reverse direction, let $\JJJ=(J,\jtbi)$ be a join decomposition
  for $\cspi$ of width at most $\log_{n+2}(n+\omega)$. We claim that
  $\BBB=(J,\bdbi)$, where $\bdbi(l)=e$ if $\jtbi(l)=c_e$ for every $l \in
  L(B)$, is a branch decomposition for $G$ of width at most $\omega$. Consider any
  node $j\in V(J)$, then $\guards(B_j)=S(j)\setminus \{a\}$, moreover, the
  constraint $C(j)$ contains every tuple $t$ such that:
  \begin{itemize}
  \item $t[a] \in \{1,\cdots,n\}$,
  \item for every $v_i \in S(j)\setminus \{a\}$, it holds that $t[v_i]\in \{1,2\}$
    if $t[a]=i$ and $t[v_i]=1$ otherwise.
  \end{itemize}
  \sloppypar Hence $|C(j)|=n+|\guards(j)|$, which because $|C(j)|\leq
  (\noftup(\cspi))^{\log_{\noftup(\cspi)}(n+\omega)}=n+\omega$ implies
  that $|\guards(j)|\leq \omega$. Since this holds for every node $j
  \in V(J)$, we obtain that $\bw(G)\leq \omega$, as required.
  The completes the first part of the proof. We will now show how to
  adapt the construction of $\cspi$ in such a way that $\bw(G)\leq
  \omega$ if and only $\jw(\cspi)\leq 1$. The idea is to artificially
  increase $\noftup(\cspi)$ from $n+2$ to $n+\omega$ without changing
  the properties of $\cspi$ too much. To achieve this we simply
  introduce a new variable $b$ with domain $\{1,\dotsc,n+\omega\}$ and
  add the complete constraint $c$ with scope $(b)$. It is easy to see
  that the reduction still works for this slightly modified CSP
  instance and moreover we obtain that $\bw(G)\leq \omega$ if and only
  if $\jw(\JJJ)=\log_{n+\omega}n+\omega=1$.

  Finally, it is easy to convert $\cspi$ into a boolean CSP instance
  by replacing the variables $a$ and $b$ with $\log n$ respectively
  $\log n+\omega$ boolean variables.
\qed \end{proof}
}



%

\lv{\paragraph{Partial Join Decompositions.} Our final task in this section is to introduce 
the concept of partial join decompositions, which will
serve as a useful tool in later sections. Let $\cspi=(V,D,C)$ be a CSP
instance and let $\JJJ=(J,\jtbi)$ be a join decomposition for $\cspi$.
We say that $\JJJ'=(J_j,\jtbi_j)$ is a \emph{partial join decomposition} of
$\JJJ$ if $j\in V(J)$ and $\jtbi_j$ is the
restriction of $\jtbi$ to $L(J_j)$. Note that the semantics of $\JJJ'$, i.e., the notions
$C(j)$ as well as $\jw(j)$ (for every node $j\in V(J_j)$, are independent of the concrete
join decomposition $\JJJ$, but only depend on $\JJJ'$ and $\cspi$. We
therefore say that $\JJJ'$ is a \emph{partial join decomposition} for $\cspi$
covering the constraints in $C' \subseteq C$ if $\jtbi(L(J_j))=C'$;
assuming that the semantics for every node $j \in V(J_j)$ are defined
as if $\JJJ'$ would be part of an arbitrary join decomposition for~$\cspi$.
}

\section{Justifying Joinwidth}
\label{sec:justification}
Below, we substantiate the use of both pruning and projections in our
definition of join decomposition. In particular, we show that using pruning
and projections allows the joinwidth to be significantly lower than
if we were to consider joins carried out via $C_{\emph{naive}}$ or
$C_{\emph{proj}}$. More importantly, we show that join decompositions without
pruning do not cover CSP instances with bounded fractional edge cover
(and by extension bounded fractional hypertree width).  To formalize this, let
$\jw_{\emph{naive}}(\cspi)$ and $\jw_{\emph{proj}}(\cspi)$ be defined
analogously as $\jw(\cspi)$, with the distinction being that these 
measure the width in terms of $C_{\emph{naive}}$ and
$C_{\emph{proj}}$ instead of $C$.

We also justify the use of trees for join decompositions by showing that
there is an arbitrary difference between linear join decompositions (which precisely correspond to simple sequences of joins) and
join decompositions.


\begin{OBS} 
\label{obs:compnaive}
  For every integer $\omega$ there exists a CSP instance $\cspi_{\omega}$ such
  that \hbox{$\jw_{\emph{proj}}(\cspi_\omega)\leq 1$}, but $\jw_{\emph{naive}}(\cspi_\omega)\geq \omega$.
\end{OBS}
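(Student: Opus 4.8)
The plan is to exhibit a deliberately \emph{decoupled} instance on which the naive join is forced to build the full Cartesian product, while the projecting variant immediately discards every intermediate column. Concretely, I would fix any base size $N\geq 2$ and let $\cspi_\omega$ have variables $v_1,\dots,v_\omega$ over the domain $D=[N]$, together with $\omega$ unary constraints $c_1,\dots,c_\omega$ where $S(c_i)=(v_i)$ and $R(c_i)=\{(1),\dots,(N)\}$. The point is that the scopes $S(c_1),\dots,S(c_\omega)$ are \emph{pairwise disjoint}, while every relation has exactly $N$ tuples, so $\noftup(\cspi_\omega)=N$. (If one prefers non-degenerate scopes, the same construction works with $\omega$ binary constraints on disjoint variable pairs; nothing changes.)

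For the upper bound $\jw_{\emph{proj}}(\cspi_\omega)\le 1$, I would first record the structural fact that for every node $j$ of every join decomposition the constraint $C_{\emph{proj}}(j)$ has scope exactly $S(j)=V(j)\cap\overline V(j)$: at a leaf this is immediate from $C_{\emph{proj}}(\ell)=\proj{\jtbi(\ell)}{\overline V(\ell)}$, and at an inner node the join has scope $V(j)=V(j_1)\cup V(j_2)$, which the subsequent projection onto $\overline V(j)$ restricts precisely to $S(j)$. Because the scopes of $\cspi_\omega$ are pairwise disjoint, $V(j)$ and $\overline V(j)$ are disjoint, hence $S(j)=\emptyset$ for every $j$. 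A constraint on the empty scope contains at most one (empty) tuple, so $|C_{\emph{proj}}(j)|\leq 1=N^{0}$ and therefore $\jw_{\emph{proj}}(j)=\log_N|C_{\emph{proj}}(j)|\le 0$. Taking any join decomposition at all thus witnesses $\jw_{\emph{proj}}(\cspi_\omega)\le 1$.

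For the lower bound $\jw_{\emph{naive}}(\cspi_\omega)\ge\omega$, the key observation is that the root value $C_{\emph{naive}}(r)$ equals the natural join of all $\omega$ constraints, and since the natural join is associative and commutative its number of tuples does not depend on the shape of the join decomposition. As the scopes are pairwise disjoint, this join is the full Cartesian product $R(c_1)\times\cdots\times R(c_\omega)$, which has $N^\omega$ tuples. Consequently $\jw_{\emph{naive}}(r)=\log_N N^{\omega}=\omega$ in \emph{every} join decomposition, so $\jw_{\emph{naive}}(\cspi_\omega)\ge\omega$, as desired.

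The only step requiring genuine care is the scope bookkeeping behind $|C_{\emph{proj}}(j)|\le 1$: one must confirm that projection collapses every intermediate relation to the empty scope at \emph{all} nodes, not merely at the root, and the identification of the scope of $C_{\emph{proj}}(j)$ with $S(j)$ makes this transparent. The remaining ingredient that upgrades a single example into a bound over \emph{all} decompositions is the decomposition-invariance of the root constraint in the naive semantics; everything else is a direct tuple count.
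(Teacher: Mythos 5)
Your proof is correct, and its skeleton matches the paper's: lower-bound $\jw_{\emph{naive}}$ by noting that $C_{\emph{naive}}(r)$ equals the join of all constraints in every decomposition and counting its tuples, and upper-bound $\jw_{\emph{proj}}$ by showing that every intermediate projected constraint has at most one tuple. The genuine difference is the witness instance. The paper uses $\omega$ binary constraints with scopes $(x,v_i)$ and tuples $(0,0)$, $(0,1)$, so all constraints overlap in a hub variable $x$ and each $C_{\emph{proj}}(j)$ collapses to the single tuple $(0)$ on scope $\{x\}$; you use pairwise disjoint unary scopes, so $S(j)=V(j)\cap\overline{V}(j)=\emptyset$ at every node and each $C_{\emph{proj}}(j)$ is the tautological empty-scope constraint. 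Your decoupled variant buys a marginally stronger conclusion --- the bound $\jw_{\emph{proj}}(j)\leq 0$ holds for \emph{every} join decomposition, not just for one chosen linear decomposition --- at the price of leaning on the paper's convention that projecting a nonempty relation onto a disjoint set of variables yields the constraint whose relation contains the empty tuple. One small imprecision worth noting: at an inner node, $C_{\emph{proj}}(j_1)\join C_{\emph{proj}}(j_2)$ has scope $S(j_1)\cup S(j_2)$, not $V(j)$, since variables internal to the two subtrees were already projected away; this is harmless for your argument, because that scope is still contained in $V(j)$, and intersecting it with $\overline{V}(j)$ again gives exactly $S(j)$, which is empty in your instance.
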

\begin{proof}
Consider the CSP instance $\cspi_\omega$ with variables
$x,v_1,\dots,v_\omega$ and for each $i\in [\omega]$ a constraint $c_i$
with scope $\{x,v_i\}$ containing the tuples $\tuple{0,1}$ and
$\tuple{0,0}$. Since $\SOL(\cspi_\omega)$ contains $2^\omega$ tuples,
it follows that every join decomposition with root $r$ must have
$|C(r)_{\emph{naive}}| = (\noftup(\cspi))^{\omega}=2^\omega$ tuples,
hence $\jw_{\emph{naive}}(\cspi_\omega)\geq \omega$.

On the other hand, consider a linear join decomposition which
introduces the constraints in an arbitrary order. Then for each inner
node $j$, it holds that $S(j)=\{x\}$ and in particular
$C_{\emph{proj}}(j)$ contains a single tuple $(0)$ over scope
$\{x\}$. We conclude that $\jw_{\emph{proj}}(\cspi_\omega)\leq 1$.
\qed \end{proof}

\lv{
Towards showing that linear joinwidth differs from joinwidth in an arbitrary manner, we will use the following lemma, which
establishes a tight relationship between the (linear) joinwidth of a
CSP instance and the branchwidth of its hypergraph when all
constraints of the CSP instance are \emph{complete}, i.e., their
relations contain all possible tuples.
\begin{LEM}
\label{lem:jw-bw}
  Let $\cspi=\tuple{V,D,C}$ be CSP instance that has only complete
  constraints. Then $\jw(\cspi)=\bw(H(\cspi))/2$ and
  $\jwl(\cspi)=\lbw(H(\cspi))$.
\end{LEM}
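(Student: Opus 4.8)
The plan is to prove both equalities at once by exploiting the fact that, for a CSP instance $\cspi$ all of whose constraints are \emph{complete}, a join decomposition $(J,\jtbi)$ is the very same combinatorial object as a branch decomposition $(B,\bdbi)$ of the hypergraph $H=H(\cspi)$: both are rooted binary trees whose leaves are in bijection with the constraints of $\cspi$, equivalently (via $c\mapsto S(c)$) with the hyperedges of $H$. Under this identification I would show that the joinwidth attached to a node $j$ and the branchwidth attached to the tree edge immediately above $j$ agree up to a single fixed scaling factor, and then take minima over the respective families of trees (all binary trees for $\jw$ versus $\bw$, and caterpillar trees for $\jwl$ versus $\lbw$, using that the defining condition ``every inner node is adjacent to a leaf'' is literally the same for linear join decompositions and for linear branch decompositions).

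The quantitative heart of the argument is to compute $|C(j)|$ for every node $j$. By Lemma~\ref{lem:jtp-part-sol} we have $C(j)=\proj{\SOL(\cspi')}{\overline V(j)}$ with $\cspi'=\cspi[V(j)]$. Since every constraint of $\cspi$ is complete, each projection $\proj{c}{V(j)}$ occurring in $\cspi'$ is again complete over its scope and hence imposes no restriction, so $\SOL(\cspi')$ is the full relation $D^{V(j)}$. Projecting this full relation onto the variables of $\overline V(j)$, whose intersection with $V(j)$ is exactly $S(j)=V(j)\cap\overline V(j)$, once more yields a full relation, so $|C(j)|=|D|^{|S(j)|}$. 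Recalling that for complete constraints $\noftup(\cspi)=|D|^{r}$, where $r$ is the maximum arity of a constraint (so $r=2$ when $H$ is a graph), we obtain $\jw(j)=\log_{\noftup(\cspi)}|C(j)|=|S(j)|/r$.

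It then remains to reinterpret $|S(j)|$ structurally. By definition $V(j)=\bigcup_{c\in X(j)}S(c)$ is the set of vertices covered by the hyperedges in the subtree $J_j$, while $\overline V(j)$ is the set of vertices covered by the hyperedges outside it; hence $S(j)$ is precisely the set $\guards(X(j))$ of cut vertices of the cut $(X(j),C\setminus X(j))$, i.e.\ $|S(j)|$ equals the branchwidth of the tree edge joining $j$ to its parent. Since the root contributes $S(\mathrm{root})=\emptyset$, it never affects the maximum, so $\jw(\JJJ)=\bw(\BBB)/r$ for the corresponding branch decomposition $\BBB$; minimising over all trees gives $\jw(\cspi)=\bw(H)/r$ (namely $\bw(H)/2$ for graphs), and the identical computation restricted to caterpillar trees relates $\jwl(\cspi)$ to $\lbw(H)$ in exactly the same way. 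The step I expect to require the most care is the size computation $|C(j)|=|D|^{|S(j)|}$: one must verify that neither the projections nor the pruning carried out along the decomposition changes this count on a complete instance — both are handled cleanly by invoking Lemma~\ref{lem:jtp-part-sol} together with the observation that on complete constraints no tuple of a full relation is ever inconsistent with another constraint — and the only remaining bookkeeping is to keep the node-indexed $\jw$ aligned with the edge-indexed $\bw$, which the empty separator at the root makes routine.
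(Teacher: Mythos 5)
Your proof is correct and follows essentially the same route as the paper's: identify join decompositions with branch decompositions of $H(\cspi)$, observe that $C(j)$ is the complete constraint on the cut vertices $S(j)$ (the paper asserts this directly, whereas you derive it from Lemma~\ref{lem:jtp-part-sol}), translate $\log_{\noftup(\cspi)}|C(j)|=|S(j)|/2$ into branchwidth, and take maxima over nodes and minima over trees; your extra care about the root node and about general arity $r$ (giving $\bw(H)/r$) only sharpens the argument, since the paper's computation $\noftup(\cspi)=d^2$ tacitly assumes binary constraints. One remark: your linear-case conclusion $\jwl(\cspi)=\lbw(H(\cspi))/2$ differs from the lemma's literal claim $\jwl(\cspi)=\lbw(H(\cspi))$, but this discrepancy is a typo in the paper rather than an error on your part---the paper's own ``analogous'' argument yields the factor $1/2$, and the lemma is applied in that form in the proof of Proposition~\ref{pro:comptree}.
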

\begin{proof}
  Let $\cspi=\tuple{V,D,C}$ be the given CSP instance with hypergraph
  $H=H(\cspi)$. First note that any join decomposition $\JJJ=(J,\jtbi)$ for
  $\cspi$ is also a branch decomposition for $H$, whose width is equal
  to $\max_{j \in V(J)}|S(j)|$. Conversely, any branch decomposition
  of $H$ is also a join decomposition for $\cspi$. Moreover, because all
  constraints of $\cspi$ are complete, we have that $C(j)$ is equal to the
  complete constraint on $|S(j)|$ variables for every node $j$ in a
  join decomposition $\JJJ=(J,\jtbi)$ for $\cspi$. Hence,
  \begin{eqnarray*}
    \jw(\cspi) & = & \max_{j \in V(J)}\log_{\noftup(\cspi}|C(j)| \\
               & = &\max_{j \in V(J)}\log_{d^2}(d^{|S(j)|})\\
               & = & \max_{j \in V(J)}|S(j)|/2=\bw(H)/2
  \end{eqnarray*}
  The proof for
  the linear versions of join decompositions and branch decompositions is
  analogous.
\qed \end{proof}

Using the above Lemma we are now ready to establish an arbitrary
difference between join decompositions and linear join decompositions.}

\sv{The next proposition justifies the use of trees instead of
  just linear join decompositions. Its proof employs an interesting
  connection between branchwidth and joinwidth.}
\begin{PRO}
\label{pro:comptree}
  For every integer $\omega$ there exists a CSP instance $\cspi_{\omega}$ such
  that $\jw(\cspi_\omega)\leq 1$ but $\jwl(\cspi_\omega)\geq \omega$.
\end{PRO}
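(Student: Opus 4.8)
The plan is to reduce the statement, via Lemma~\ref{lem:jw-bw}, to a purely graph-theoretic separation between branchwidth and its linear variant, and then to instantiate it with complete binary trees. Concretely, fix a height $h$ (to be chosen as a function of $\omega$) and let $T_h$ be the complete binary tree of height $h$, viewed as a graph. Define $\cspi_\omega=\tuple{V,D,C}$ by setting $V=V(T_h)$, $D=\{0,1\}$, and adding, for every edge $\{u,v\}\in E(T_h)$, a \emph{complete} binary constraint with scope $(u,v)$ and relation $D^2$. Then every constraint of $\cspi_\omega$ is complete, $H(\cspi_\omega)=T_h$, and $\noftup(\cspi_\omega)=4$, so Lemma~\ref{lem:jw-bw} applies and yields $\jw(\cspi_\omega)=\bw(T_h)/2$ and $\jwl(\cspi_\omega)=\lbw(T_h)$. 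It therefore suffices to show that $\bw(T_h)\le 2$ while $\lbw(T_h)$ can be forced to be at least $\omega$ by taking $h$ large enough.

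For the upper bound I would use the fact that every tree is $K_4$-minor-free (indeed series-parallel), together with the characterization that a graph has branchwidth at most $2$ exactly when it has no $K_4$ minor. Hence $\bw(T_h)\le 2$, and by the identity above $\jw(\cspi_\omega)=\bw(T_h)/2\le 1$, as required.

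For the lower bound I would argue that the linear branchwidth of complete binary trees grows without bound with the height. The cleanest route is through pathwidth: the complete binary tree of height $h$ has pathwidth $\lceil h/2\rceil$, and pathwidth and linear branchwidth are equivalent up to a constant factor (in particular $\pw(G)\le c\cdot\lbw(G)$ for an absolute constant $c$, mirroring the classical relation $\tw(G)\le\lfloor\tfrac{3}{2}\bw(G)\rfloor-1$ between treewidth and branchwidth). Consequently $\lbw(T_h)\ge \pw(T_h)/c=\Omega(h)$, so choosing $h$ as a suitable linear function of $\omega$ guarantees $\lbw(T_h)\ge\omega$, i.e.\ $\jwl(\cspi_\omega)=\lbw(T_h)\ge\omega$.

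The main obstacle is precisely this last lower bound. Rather than invoking the pathwidth equivalence as a black box, one may instead give a direct separator/sweeping argument on $T_h$: interpreting a linear branch decomposition as a linear ordering $e_1,\dots,e_m$ of the edges and examining the prefix cuts $F_i=\{e_1,\dots,e_i\}$, one shows that keeping every $|\guards(F_i)|$ small would force the ordering to process the subtrees hanging off the nodes of each level nearly independently, which in turn makes some prefix cut have $\Omega(h)$ boundary vertices. Making this argument quantitatively tight---and reconciling the caterpillar shape of a linear branch decomposition with such an edge ordering---is the technical heart of the proof; by contrast, the branchwidth upper bound and the reduction through Lemma~\ref{lem:jw-bw} are routine.
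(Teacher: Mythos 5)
Your proposal is correct and takes essentially the same route as the paper's own proof: both reduce the statement via Lemma~\ref{lem:jw-bw} to a branchwidth versus linear-branchwidth separation on a CSP whose constraints are all complete and binary and whose hypergraph is a tree, and both obtain the lower bound from the unbounded pathwidth of (complete binary) trees together with the constant-factor equivalence of pathwidth and linear branchwidth. Your explicit instantiation with complete binary trees and the $K_4$-minor characterization of branchwidth at most $2$ is, if anything, slightly more careful than the paper's blanket claim that trees have branchwidth one.
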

\lv{
\begin{proof}
  It is well known that trees have branchwidth one but can have
  arbitrarily high pathwidth~\cite{Diestel95}. Since pathwidth is
  known to be within a constant factor of linear
  branchwidth~\cite{Nordstrand17}, we obtain that for every $\omega$,
  there is a tree $T_\omega$ with $\bw(T_\omega)\leq 1$ but
  $\lbw(T_\omega)\geq \omega$. Now let $\cspi_{\omega}$ be the (boolean) CSP instance
  containing only full binary constraints such that
  $H(\cspi_\omega)=T_{2\omega}$. Then because of Lemma~\ref{lem:jw-bw} we have
  that $\jw(\cspi_\omega)=\bw(T_{2\omega})/2=0.5\leq 1$ and
  $\jwl(\cspi_\omega)=\lbw(T_{2\omega})/2=\omega$, as
  required.
\qed \end{proof}
}

\lv{
The next proposition is by far the most technically challenging of the three, and requires notions which will only be introduced later on. For this reason, we postpone its proof to the end of Subsection~\ref{sub:fract}. The proposition shows not only that pruning can significantly reduce the size of stored constraints, but also that without pruning (i.e., with projections alone) one cannot hope to generalize structural parameters such as fractional hypertree width.}
\sv{
The next proposition shows not only that pruning can significantly reduce the size of stored constraints, but also that without pruning (i.e., with projections alone) one cannot hope to generalize structural parameters such as fractional hypertree width.}

\begin{PRO}
\label{pro:compproj}
  For every integer $\omega$ there exists a CSP instance
  $\cspi_{\omega}$ with hypergraph $H^\omega$ such that
  $\jw(\cspi_\omega)\leq 2$ and $\fecn(H^\omega)\leq 2$ (and hence also $\fhtw(H^\omega)\leq 2$), but
  $\jw_{\emph{proj}}(\cspi_\omega)\geq \omega$.
\end{PRO}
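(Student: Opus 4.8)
The plan is to separate the two conditions. Since $\fecn(H^\omega)\le 2$ gives $\fhtw(H^\omega)\le 2$ (the single-bag tree decomposition is guarded by any optimal fractional edge cover), the bound $\jw(\cspi_\omega)\le \fhtw(\cspi_\omega)\le 2$ will follow automatically from the fhtw-to-joinwidth direction developed later in the paper. Hence the entire burden is to build, for every $\omega$, a CSP instance $\cspi_\omega$ whose hypergraph $H^\omega$ admits a fractional edge cover of weight $2$, yet every join decomposition of $\cspi_\omega$ contains a node $j$ with $|C_{\emph{proj}}(j)|\ge m^\omega$, where $m=\noftup(\cspi_\omega)$.

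The mechanism I would exploit is the \emph{pruning-free} analogue of Lemma~\ref{lem:jtp-part-sol}: for a node $j$ one has $C_{\emph{proj}}(j)=\proj{\SOL(\cspi_{X(j)})}{\overline V(j)}$, where $\cspi_{X(j)}$ is the sub-instance consisting only of the constraints in $X(j)$ (those below $j$). Consequently, by the AGM-type bound underlying Proposition~\ref{pro:fhtw-bag-sol}, the size of $C_{\emph{proj}}(j)$ is, up to realizability, governed by the minimum weight of a fractional edge cover of the boundary $S(j)$ that uses \emph{only} the edges present in $X(j)$. The crucial point—and the reason pruning matters—is that deleting the globally covering constraints from $X(j)$ can make this restricted cover number arbitrarily large even though the global fractional edge cover number stays at $2$; pruning, in contrast, effectively reintroduces the external constraints (exactly as in Lemma~\ref{lem:jtp-part-sol}) and keeps $|C(j)|$ small.

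Accordingly, I would design $H^\omega$ so that its fractional edge cover number is $2$ but no \emph{small} set of edges realizes a good cover: rather than two wide covering edges (which an adversarial decomposition could simply process first, as happens with the triangle of Example~\ref{ex:join}), I would use a balanced family of wide edges in which every vertex lies in a constant fraction of the edges, so that weight $2$ is attained only by spreading weight over many of them. The target structural property is that \emph{every} branch/join decomposition has a node $j$ whose boundary $S(j)$ can be covered, using the edges of $X(j)$, only with weight at least $\omega$. I would then instantiate all edges with \emph{generic} (near-complete) relations of a common size $m$, so that the AGM bound is tight and the projected join at such a node genuinely attains $m^{\omega}$; this is the lower-bound counterpart to the branchwidth computation in Lemma~\ref{lem:jw-bw}, where for complete constraints the stored relation at $j$ is determined precisely by the combinatorics of the boundary.

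The hard part will be twofold. First, one must construct a hypergraph exhibiting a gap between its fractional edge cover number (which must remain $2$, thereby bounding $\fhtw$ and hence $\jw$) and the ``restricted branch-cover width'' that is forced at some node of every decomposition (which must grow to $\omega$); this is exactly the phenomenon separating fractional hypertree width from the stronger measures, and it requires a genuine design argument rather than a product of independent gadgets, since—as the logarithmic definition of width makes clear—disjoint copies merely multiply $m$ instead of raising the exponent. Second, one must establish the lower bound against \emph{all} join decompositions, including non-linear (tree) ones, via a branchwidth-style argument showing that no ordering of the covering constraints can keep every boundary cheaply covered from below, while simultaneously checking that the chosen relations are loose enough to realize the AGM lower bound. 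These two requirements are precisely why the statement is the most delicate of the three, and why its proof is postponed until the later machinery (in particular the fhtw-to-joinwidth construction and the relations realizing the bound) becomes available.
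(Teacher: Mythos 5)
Your high-level plan is the right one, and it matches the paper's strategy: the upper bound is indeed delegated to Theorem~\ref{the:jtw-fhtw} via $\jw(\cspi_\omega)\leq\fhtw(H^\omega)\leq\fecn(H^\omega)\leq 2$, and the lower bound does hinge on finding, in \emph{every} join decomposition, a balanced node $j$ at which $\overline{V}(j)=V$ (so projection is vacuous and $C_{\emph{proj}}(j)$ is the full join of $X(j)$), which then must be huge. However, your proposal stops exactly where the proof begins: both of your ``hard parts'' are the entire content of the paper's argument, and you supply neither the hypergraph nor the relations. The paper resolves the first part with the Atserias--Grohe--Marx hypergraph: for $m=4\omega+1$, take one vertex $v_S$ per $m$-subset $S\subseteq[2m]$ and one edge $c_i=\SB v_S \SM i\in S\SE$ per $i\in[2m]$; assigning weight $1/m$ to each edge gives $\fecn\leq 2$, and in any join decomposition a standard walk-down argument yields a node $j$ with $\lceil m/2\rceil\leq|X(j)|<m$, so at least $m+1$ edges lie outside $X(j)$, every variable meets one of them, and hence $S(j)=V(j)$.

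The more serious problem is that your one concrete instantiation idea---``generic (near-complete) relations of a common size so that the AGM bound is tight''---would fail. AGM tightness for the \emph{global} hypergraph is realized by complete relations over suitably weighted domains, and with (near-)complete relations every intermediate join is itself (near-)complete on $V(j)$, so $|C_{\emph{proj}}(j)|\leq\noftup(\cspi_\omega)^{2}$ by the very same cover bound that gives $\fecn\leq 2$; compare Lemma~\ref{lem:jw-bw}, where completeness collapses joinwidth to a purely structural quantity. In other words, relations tight for the whole hypergraph keep every sub-join small, whereas truly generic sparse relations on wide, heavily overlapping scopes make the sub-joins empty. What is actually needed is a family of \emph{sparse} relations (so that $\noftup$ stays polynomial in the edge size) that simultaneously realizes, at every balanced node $j$, the large fractional edge cover number of the sub-hypergraph induced by $X(j)$---large because at least $\lceil m/2\rceil$ edges of $X(j)$ own private vertices ($S_o=\{o\}\cup S'$ with $S'$ an $(m-1)$-subset of the outside indices). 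The paper achieves this with ``star'' relations: each tuple has at most one entry different from $1$, so the constraints in $X(j)$ can set their private variables to arbitrary values \emph{independently} while agreeing (with value $1$) everywhere else, forcing $|C_{\emph{proj}}(j)|\geq n^{\lceil m/2\rceil}$ while $\noftup(\cspi_\omega)=n^2/2$, i.e., $\jw_{\emph{proj}}(\cspi_\omega)\geq m/4\geq\omega$. This design step is not a routine verification of AGM realizability, and without it the proposal does not yield the proposition.
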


%

We believe that the above results are of general interest, as they
provide useful insights into how to best utilize the joining of
constraints.

\section{Tractable Classes}\label{sec:tract-classes}
Here, we show that join decompositions of small width not only allow us to solve a wide range of CSP instances, but also provide a unifying reason for the tractability of previously established structural parameters and tractable classes.

\subsection{Fractional Hypertree Width}
\label{sub:fract}

We begin by showing that joinwidth is a strictly more general parameter than
fractional hypertree width. We start with a simple example showing that
the joinwidth of a CSP instance can be arbitrarily smaller than its
fractional hypertree width. Indeed, this holds for any
structural parameter $\psi$ measured purely on the hypergraph representation, i.e., we say that $\psi$ is a \emph{structural
parameter} if $\psi(\cspi)=\psi(H(\cspi))$ for any CSP instance~$\cspi$. Examples for structural parameters include fractional
and generalized hypertree width, but also \emph{submodular width}~\cite{Marx13}.
\begin{OBS}
\label{obs:fhtw}
  Let $\psi$ be any structural parameter such that for every $\omega$ there is a CSP
  instance with $\psi(\cspi)=\psi(H(\cspi))\geq \omega$. Then for every $\omega$ there is a CSP instance $\cspi_\omega$ with
  $\jw(\cspi_\omega)\leq 1$ but $\psi(\cspi_\omega)\geq \omega$.
\end{OBS}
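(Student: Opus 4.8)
The plan is to exploit the one feature that defines a \emph{structural} parameter, namely that $\psi(\cspi)=\psi(H(\cspi))$ depends only on the hypergraph and is completely blind to the actual constraint relations. Given $\omega$, I first invoke the hypothesis to fix a hypergraph $H$ (the hypergraph of some CSP instance) with $\psi(H)\geq\omega$. I then construct $\cspi_\omega$ in such a way that $H(\cspi_\omega)=H$ but the relations are essentially trivial. Because $\psi$ is structural, this choice instantly yields $\psi(\cspi_\omega)=\psi(H)\geq\omega$, so the entire burden of the proof is to establish the upper bound $\jw(\cspi_\omega)\leq 1$.

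For the relations I would take domain $\{0,1\}$, assign to all but one constraint the single all-zero tuple over its scope, and single out one distinguished constraint $c^*$ together with a variable $v^*\in S(c^*)$, giving $c^*$ exactly two tuples: the all-zero tuple and the all-zero tuple with the $v^*$-coordinate flipped to $1$. This makes $\noftup(\cspi_\omega)=2$. The key structural fact is that every variable other than $v^*$ has its coordinate equal to $0$ in \emph{every} tuple of \emph{every} relation, so it is pinned to $0$; only $v^*$ is ever allowed to vary, and only through $c^*$. Consequently, for every subset $V'\subseteq V$ the induced sub-instance $\cspi_\omega[V']$ has at most two solutions, i.e.\ $|\SOL(\cspi_\omega[V'])|\leq 2$. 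Now I would apply Lemma~\ref{lem:jtp-part-sol}: for an arbitrary join decomposition $(J,\jtbi)$ and any node $j$ we have $C(j)=\proj{\SOL(\cspi_\omega[V(j)])}{\overline{V}(j)}$, and since a projection never increases the number of tuples, $|C(j)|\leq|\SOL(\cspi_\omega[V(j)])|\leq 2=\noftup(\cspi_\omega)$. Hence $\jw(j)\leq 1$ for every node of every join decomposition, which gives $\jw(\cspi_\omega)\leq 1$.

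The main obstacle is a small but genuine one: the tempting choice of making \emph{all} relations single-tuple would give $\noftup(\cspi_\omega)=1$, for which the quantity $\jw(j)=\log_{\noftup(\cspi_\omega)}|C(j)|$ is ill-defined. The mild two-tuple gadget on $c^*$ is precisely what raises $\noftup$ to $2$ while, crucially, leaving the hypergraph \emph{unchanged} (so that $\psi$ is untouched) and keeping every induced solution set of size at most $2$. The one point that needs a careful check is that flipping $v^*$ in a single constraint does not secretly create additional solutions in some induced sub-instance; but since any variable distinct from $v^*$ is forced to $0$ by every relation it appears in, at most the single coordinate $v^*$ can take two values, capping the solution count at $2$ throughout. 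I expect verifying this ``at most two solutions in every $\cspi_\omega[V']$'' invariant, and then piping it through Lemma~\ref{lem:jtp-part-sol}, to be the only real content of the argument.
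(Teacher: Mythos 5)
Your proof is correct, and its high-level strategy is exactly the paper's: since $\psi$ is blind to relations, fix a hypergraph $H_\omega$ with $\psi(H_\omega)\geq\omega$ and realize it as a CSP instance whose relations are so trivial that no join decomposition can ever produce a large intermediate constraint. The difference lies in the filling and the verification. The paper replaces every hyperedge by an \emph{empty} constraint (a relation with no tuples); then every $C(j)$ is empty and the width bound is immediate, with no appeal to Lemma~\ref{lem:jtp-part-sol} at all. Your two-tuple gadget is a more careful instantiation: it makes $\noftup(\cspi_\omega)=2$, so the quantity $\log_{\noftup(\cspi_\omega)}|C(j)|$ from Definition~\ref{def:width} is genuinely well-defined, and it then routes the bound $|C(j)|\leq|\SOL(\cspi_\omega[V(j)])|\leq 2$ through Lemma~\ref{lem:jtp-part-sol}, which is a valid (if heavier) argument; your check that every variable other than $v^*$ is pinned to $0$ in every induced sub-instance goes through, given the paper's standing assumption that each variable occurs in some scope. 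It is worth noting that the degeneracy you deliberately engineered around afflicts the paper's own construction: empty relations force $\noftup(\cspi_\omega)=0$, so the logarithmic form of the width is ill-defined there as well (the bound survives only via the ``smallest $\omega$ with $|C(j)|\leq(\noftup(\cspi_\omega))^{\omega}$'' reading, since $0\leq 0^{1}$). So your version buys robustness --- a well-defined width computation and even a satisfiable instance --- at the cost of the one-line brevity of the paper's argument.
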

\lv{
\begin{proof}
  Let $H_\omega$ be any hypergraph with $\psi(H_\omega)\geq \omega$. Then the CSP
  instance $\cspi_\omega$ obtained from $H_\omega$ by replacing every hyperedge with an
  empty constraint satisfies $\psi(\cspi_\omega)\geq \omega$ and
  $\jw(\cspi_\omega)\leq 1$. 
\qed \end{proof}
}
\lv{Note that it is straightforward to construct more interesting examples
that also show an arbitrary difference between joinwidth and the
recently introduced extensions of fractional hypertree width (or even
submodular width) with degree constraints and functional
dependencies~\cite{Khamis0S17}. For instance, when comparing joinwidth
with fractional hypertree width, there are many possibilities
to fill the constrains in such a way that
the join-width remains low and every such possibility leads to a new
example showing the difference between these two width measures. To
ensure this it would, e.g., be sufficient to ensure that every set of
constrains that appear together in a bag of a fractional hypertree
decomposition have small join-width; one such example is given in
Proposition~\ref{pro:compproj}.}
The following theorem shows that, for the case of fractional
hypertree width, the opposite of the above observation
is not true.

\begin{THE}\label{the:jtw-fhtw}
  For every CSP instance $\cspi$, it holds that $\jw(\cspi) \leq \fhtw(\cspi)$. 
\end{THE}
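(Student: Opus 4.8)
The plan is to take an optimal fractional hypertree decomposition $\TTT=(T,(B_t)_t,(\fecm_t)_t)$ of $H(\cspi)$ of width $\omega=\fhtw(\cspi)$ and convert it into a join decomposition $\JJJ$ whose width is at most $\omega$. The key connecting fact is already supplied by Proposition~\ref{pro:fhtw-bag-sol}: for every node $t$ and every subset $B\subseteq B_t$, we have $|\SOL(\cspi[B])|\leq (\noftup(\cspi))^\omega$. Since Lemma~\ref{lem:jtp-part-sol} tells us that the constraint computed at a node $j$ of a join decomposition is exactly $C(j)=\proj{\SOL(\cspi[V(j)])}{\overline V(j)}$, the strategy is to build $\JJJ$ so that at every node $j$ the ``active'' variable set $\overline V(j)$ (or at least $S(j)$) is contained in a single bag $B_t$ of $\TTT$. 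If we can guarantee this, then $C(j)$ is a projection of $\SOL(\cspi[V(j)])$ onto variables living inside a bag, and the bound $(\noftup(\cspi))^\omega$ follows, giving $\jw(j)\leq\omega$ at every node and hence $\jw(\JJJ)\leq\omega$.

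\textbf{First} I would preprocess $\TTT$ into a \emph{nice} / binary form: make $T$ rooted and binary, and ensure each constraint $c\in C$ is ``assigned'' to some node whose bag contains $S(c)$ (such a node exists by the tree-decomposition covering property). \textbf{Next}, the main combinatorial work is to linearize the assignment of constraints to the leaves of a binary tree $J$ in a way that respects the separator structure of $\TTT$. The natural choice is to build $J$ by following the shape of $T$: process the subtrees of $T$ recursively, and when combining the constraints coming from different branches at a node $t$, arrange the corresponding joins so that the variables shared across the ``already processed'' and ``not-yet-processed'' parts lie in $B_t$. Concretely, I would show that for the resulting $\JJJ$, every node $j$ corresponds to a ``partial'' region of $T$ whose boundary variables $S(j)$ are confined to a bag — this is where the running-intersection property of the tree decomposition does the heavy lifting, since the set of variables shared between a subtree of $T$ and its complement is contained in the separator bag at the cut edge.

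\textbf{The hard part will be} the bookkeeping that guarantees $\overline V(j)\cap V(j)=S(j)\subseteq B_t$ for some bag $B_t$ at every internal node $j$ of $J$, not merely at the ``top'' nodes corresponding to edges of $T$. The subtlety is that a binary join tree has more internal nodes than the decomposition $T$ has edges, so the intermediate nodes created while joining the several constraints assigned to a single bag $B_t$ must also be controlled; here I would argue that all such intermediate nodes have $S(j)\subseteq B_t$ as well, because every constraint being combined at that stage has its scope inside $B_t$, so any variable shared with the outside is already a boundary variable of $B_t$ and thus lies in the separators on the path from $t$ to the rest of $T$. Once this invariant is in place, applying Proposition~\ref{pro:fhtw-bag-sol} with $B=S(j)\subseteq B_t$ (together with the fact that projecting $\SOL(\cspi[V(j)])$ onto $\overline V(j)$ cannot increase the tuple count beyond $|\SOL(\cspi[S(j)])|$, since variables outside $S(j)$ do not constrain the boundary) yields $|C(j)|\leq(\noftup(\cspi))^\omega$, completing the bound $\jw(\cspi)\leq\fhtw(\cspi)$.
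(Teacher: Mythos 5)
Your proposal is correct and follows essentially the same route as the paper's proof: assign each constraint to a bag covering its scope, join the constraints of each bag (where $V(j)\subseteq B_t$ lets Proposition~\ref{pro:fhtw-bag-sol} apply directly), and glue these partial join decompositions along the shape of $T$, using the running-intersection property to keep $S(j)$ inside a bag at the gluing nodes and then bounding $|C(j)|=|\proj{\SOL(\cspi[V(j)])}{S(j)}|\leq|\SOL(\cspi[S(j)])|\leq(\noftup(\cspi))^{\omega}$ via Lemma~\ref{lem:jtp-part-sol}. The only slip is the suggestion that $\overline{V}(j)$ itself could be confined to a bag (it generally cannot, being the union of all unprocessed scopes), but your fallback invariant $S(j)\subseteq B_t$ is exactly what is needed and is what the paper's argument establishes.
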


\begin{proof}
  Let $H$ be the hypergraph of the given CSP instance $\cspi=(V,D,C)$ and
  let $\TTT=(T, (B_t)_{t \in V(T)}, (\gamma_t)_{t \in V(T)})$ be an
  optimal fractional hypertree decomposition of $H$. We prove the
  theorem by constructing a join decomposition $\JJJ=(J,\jtbi)$ for $\cspi$,
  whose width is at most $\fhtw(H)$.
  Let $\alpha : E(H) \rightarrow V(T)$ be some function from the edges
  of $H$ to the nodes of $T$ such that $e \subseteq B_{\alpha(e)}$ for
  every $e \in E(H)$. Note that such a function always exists, because
  $(T,(B_t)_{t \in V(T)})$ is a tree decomposition of $H$. We denote
  by $\alpha^{-1}(t)$ the set $\SB e \in E(H) \SM \alpha(e)=t \SE$.

  The construction of $\JJJ$ now
  proceeds in two steps. First we construct a partial join decomposition
  $\JJJ^t=(J^t,\jtbi^t)$ for $\cspi$ that covers only the constraints
  in $\alpha^{-1}(t)$, for every $t \in V(T)$. 
  Second, we show 
  how to combine all the partial join decompositions into the join decomposition $\JJJ$ for $\cspi$ of
  width at most $\fhtw(H)$. 

  Let $t \in V(T)$ and let $\JJJ^t=(J^t,\jtbi^t)$ be an arbitrary partial join decomposition
  for $\cspi$ that covers the constraints in $\alpha^{-1}(t)$. Let us
  consider an arbitrary node $j \in V(J^t)$.
  By Lemma~\ref{lem:jtp-part-sol}, we know that
  $C(j)=\proj{\SOL(\cspi[V(j)])}{S(j)}$. Moreover, the fact that $\bigcup_{e \in \alpha^{-1}(t)}e
  \subseteq B_t$ implies $V(j) \subseteq B_t$.
  Since $|\proj{\SOL(\cspi[V(j)])}{S(j)}|\leq |\SOL(\cspi[V(j)])|$, by invoking
  Proposition~\ref{pro:fhtw-bag-sol} we obtain that
  $|C(j)|\leq |\SOL(\cspi[V(j)])| \leq
  \noftup(\cspi)^{\fhtw(\cspi)}$. Hence we conclude that $\jw(j)\leq
  \fhtw(H)$. 
  
  Next, we show how to combine the partial join decompositions $\JJJ^t$ into the
  join decomposition $\JJJ$ for $\cspi$. We will do this via a bottom-up
  algorithm that computes a (combined) partial join decomposition
  $\FFF^t=(F^t,\jtbie^t)$ (for every node $t \in V(T)$) that covers all constraints in
  $\alpha^{-1}(T_t)=\bigcup_{t \in V(T)}\alpha^{-1}(t)$.
  Initially, we set
  $\FFF^l=\JJJ^l$ for every leaf $l \in L(T)$. 
  For a
  non-leaf $t \in V(T)$ with children $t_1,\dotsc,t_\ell$ in $T$, we
  obtain $\FFF^t$ from the already computed partial join decompositions
  $\FFF^{t_1},\dotsc,\FFF^{t_\ell}$ as follows. Let $P$ be a path on the
  new vertices $p_1,\dotsc,p_\ell$ and let $r_t$ and
  $r_{t_1},\dotsc,r_{t_\ell}$ be the root nodes of $J^t$ and
  $F^{t_1},\dotsc,F^{t_\ell}$, respectively. Then we obtain $F^t$ from the disjoint
  union of $P$, $J^t$, $F^{t_1},\dotsc,F^{t_\ell}$ after adding an edge
  between $r_t$ and $p_1$ and an edge between $r_{t_i}$ and $p_i$ for
  every $i$ with $1\leq i \leq \ell$ and setting $p_\ell$ to be the root of
  $F^t$. Moreover, $\jtbie^t$ is obtained as the combination (i.e., union) of the
  functions $\jtbi^t$, $\jtbie^{t_1},\dotsc,\jtbie^{t_\ell}$. 
  Observe that because $\alpha$ assigns every hyperedge to 
  precisely one bag of $T$, it holds that every constraint assigned to $T_t$
  is mapped to precisely one leaf of $\FFF^t$. At this point, all that
  remains is to show that $\FFF^t$ has joinwidth at most $\fhtw(H)$.

  Since we have already argued that $|\SOL(\cspi[V(j)])| \leq
  \noftup(\cspi)^{\fhtw(H)}$ for every node $j$ of $\JJJ^t$ and
  moreover we can assume that the same holds for every node $j$ of
  $\FFF^{t_1},\dotsc,\FFF^{t_\ell}$ by the induction hypothesis, it only remains to
  show that the same holds for the nodes $p_1,\dotsc,p_\ell$. First, observe
  that since $(T,(B_t)_{t \in V(T)})$ is a tree decomposition of
  $H$, it holds that $S(r_t),S(r_{t_1}),\dotsc,S(r_{t_\ell}) \subseteq
  B_t$. Indeed, consider for a contradiction that, w.l.o.g., there exists a variable $x\in S(r_{t_1})\setminus B_t$. Then there must exist a hyperedge $e_1\ni x$ mapped to $t_1$ or one of its descendants, and another hyperedge $e_i\ni x$ mapped to some node $t'$ that is neither $t_1$ nor one of its descendants. But then both $B_{t'}$ and $B_{t_1}$ must contain $x$, and so $B_t$ must contain $x$ as well. Moreover, since $S(p_i)\subseteq (S(r_t)\cup S(r_{t_1})\cup \dots \cup S(r_{t_\ell}))$ for every $i\in [\ell]$, it follows that $S(p_i)\subseteq B_t$ as well.
  
  Finally, recall that $C(p_i)=\proj{\SOL(\cspi[V(p_i)])}{S(p_i)}$ by
  Lemma~\ref{lem:jtp-part-sol}, and observe that
  $|\proj{\SOL(\cspi[V(p_i)])}{S(p_i)}|\leq
  |\SOL(\cspi[S(p_i)])|$. Then by Proposition~\ref{pro:fhtw-bag-sol}
  combined with the fact that $S(p_i)\subseteq B_t$, we obtain
  $|C(p_i)|\leq |\SOL(\cspi[S(p_i)])|\leq \noftup(\cspi)^{\fhtw(H)}$,
  which implies that the width of $p_i$ is indeed at most $\fhtw(H)$.
\qed \end{proof}

\lv{
Since we have now established the relationship between joinwidth and fractional hypertree width, we can conclude this subsection with a proof of Proposition~\ref{pro:compproj} (cf.\ Section~\ref{sec:justification}).

\begin{proof}[Proof of Proposition~\ref{pro:compproj}]
  The proof uses an adaptation of a construction given by Atserias, Grohe and Marx~\cite[Theorem 7]{AtseriasGroheMarx13}. Let $m=4\omega+1$ and $n=\binom{2m}{m}$. The CSP instance
  $\cspi_\omega$ has:
  \begin{itemize}
  \item one variable $v_S$ for every $S \subseteq [2m]$ with $|S|=m$,
  \item one constraint $c_i$ for every $i$ with $1\leq i \leq 2m$ with
    scope $\SB v_S \SM i \in S \SE$. The relation of $c_i$ contains
    the following tuples $t$ for every $v \in S(c_i)$: every tuple $t$
    such that $t[v] \in [n]$ and $t[u]=1$ for every $u\neq v$,
  \item domain $[n]$.
  \end{itemize}
  Let $H$ be the hypergraph of $\cspi_\omega$. Note first that $H$ has $n$
  vertices, $2m$ edges, every edge of $H$ has size
  $\binom{2m-1}{m-1}=n/2$, every vertex of $H$ occurs in exactly $m$
  edges, and every set of at most $m$ edges of $H$ misses at least one
  vertex, i.e., the union of at most $m$ edges of $H$ misses at least one vertex from $V(H)$.
  Note furthermore that $\noftup(\cspi_\omega)=\frac{n^2}{2}$.

  It is shown in~\cite[Theorem 4]{AtseriasGroheMarx13} that
  $H$ has a fractional edge cover of size at most $2$. This
  is witnessed by the mapping $\fecm : E(H) \rightarrow \R$ defined by
  setting $\fecm(e)=1/m$ for every $e\in E(H)$. Because every variable $v_S$
  of $\cspi_\omega$ is in the scope of exactly $m$ constraints (i.e. the
  constraints in $\SB c_i \SM i\in S \SE$), $\fecm$ is indeed a fractional
  edge cover and because $\sum_{e \in E(H)}\fecm(e)=2m(1/m)=2$ it has
  size exactly $2$. Note that because $\fecn(H)\geq \fhtw(H)$, we
  obtain from Theorem~\ref{the:jtw-fhtw} that
  $jw(\cspi_\omega)\leq 2$.

  It remains to show that $\jw_{\emph{proj}}(\cspi_\omega)\geq \omega$. Consider
  an arbitrary join decomposition $\JJJ=(J,\jtbi)$ of $\cspi_\omega$. Let $j$
  be a node of $J$ such that $\lceil m/2 \rceil \leq |L(J_j)|< m$. Note
  that such a node $j$ always exists due to the following well-known
  combinatorial argument.

  First we show that there is a node $j'$ of $J$ such that
  $|L(J_{j'})| \geq m$
  but $|L(J_{j''})| < m$
  for every child $j''$ of $j'$ in $J$. Namely, $j'$ can be found by
  going down the tree $J$ starting from the root and choosing a child
  $j''$ of $j'$ with $|L(J_{j''})| \geq m$, as long as
  such a child $j''$ exists. Then letting $j$ be the child
  of $j'$ maximizing $|L(J_j)|$ implies that
  $\lceil m/2 \rceil \leq |L(J_j)| < m$,
  as required.

  We claim that $\jw_{\emph{proj}}(j) \geq \omega$. First note that
  $S(j)=V(j)$. This is because $|C(\cspi_\omega)\setminus X(j)|\geq
  2m-(m-1)=m+1$ and hence $\overline{V}(j)=\bigcup_{c \in
    (C(\cspi_\omega)\setminus X(j))}S(c)=V(\cspi_\omega)$; this is because $S \cap \SB i \SM
  c_i \in C(\cspi_\omega)\setminus X(j)\SE\neq \emptyset$ for every
  variable $v_S$ of $\cspi_\omega$. Consequently, $C_{\emph{proj}}(j)$ is equal
  to the join of all constraints in $X(j)$. We show next that there
  are at least $\lceil m/2 \rceil$ variables that appear in the scope of exactly one
  constraint in $X(j)$, which implies that $C_{\emph{proj}}(j)$ contains every of
  the $n^{\lceil m/2 \rceil}$ tuples on these variables. To see this let $O$ be
  the set of all elements $o \in [2m]$ for which $c_o \in X(j)$ and
  let $\overline{O}$ be the set of all the remaining elements in $[2m]$. Then $|O|
  \geq \lceil m/2\rceil$ and $|\overline{O}|\geq 2m-(m-1)=m+1$. Let $S'$ be any set of
  exactly $m-1$ elements from $\overline{O}$ and for every $o \in O$, let $S_o$ be equal to
  $\{o\} \cup S'$. Then $v_{S_o}$ is in the scope of exactly one
  constraint in $C(j)$ (i.e., the constraint $c_o$ and hence there are
  at least $\lceil m/2\rceil$ variables that occur in the scope of exactly one
  constraint in $X(j)$, i.e., the variables $\SB S_o \SM o \in O \SE$.
  Hence $C_{\emph{proj}}(j)$ contains at least $n^{\lceil m/2\rceil}$ tuples, which
  because $\noftup(\cspi_\omega)=n^2/2$ implies that $\jw_{\emph{proj}}(j)$
  is at least $\lceil m/2\rceil/2\geq m/4\geq \omega$, as required.
\qed \end{proof}
}

%
\subsection{Functionality and Root Sets}
Consider a CSP instance $\cspi=\tuple{V,D,C}$ with $n=|V|$. We say
that a constraint $c\in C$ is \emph{functional} on variable $v\in V$
if $c$ does not contain two tuples that differ \emph{only} at variable
$v$; more formally, for every $t$ and $t'\in R(c)$ it holds that if
$t[v]\neq t'[v]$, then there exists a variable $z\in S(c)$ distinct
from $v$ such that $t[z]\neq t'[z]$. The instance $\cspi$ is then
called \emph{functional} if there exists a variable ordering $v_1<
\dots < v_n$ such that, for each $i\in [n]$, there exists a constraint
$c\in C$ such that $\proj{c}{\{v_1,\dots,v_i\}}$ is functional on
$v_i$.
Observe that every CSP instance that is functional can admit at most
$1$ solution~\cite{CohenCGM11}; this restriction can be relaxed through the notion of
\emph{root sets}, which can be seen as variable sets that form
``exceptions'' to functionality. Formally, a variable set $Q$ is a
root set if there exists a variable ordering $v_1< \dots < v_n$ such
that, for each $i\in [n]$ where $v_i\not \in Q$, there exists a
constraint $c\in C$ such that $\proj{c}{\{v_1,\dots,v_i\}}$ is
functional on $v_i$; we say that $Q$ is \emph{witnessed} by the
variable order $v_1< \dots < v_n$.

Functionality and root sets were studied for Boolean CSP
\cite{DevilleH91,David95}. Cohen et al.~\cite{CohenCGM11} later
extended these notions to the CSP with larger domains. Our aim in this
section is twofold:
\sv{
  (1) generalize root sets through the introduction of
  \emph{constraint root sets} and (2) show that bounded-size
  constraint root sets (and also root sets) form a special case of
  bounded joinwidth.
}
\lv{\begin{itemize}
\item generalize root sets through the introduction of \emph{constraint root sets};
\item show that bounded-size constraint root sets (and also root sets) form a special case of bounded joinwidth.
\end{itemize}
}
Before we proceed, it will be useful to show that one can always assume the root set to occur at the beginning of the variable ordering.

\begin{OBS}
Let $Q$ be a root set in $\cspi$ witnessed by a variable order $\alpha$, assume a fixed arbitrary ordering on $Q$, and let the set $V'=V(\cspi)\setminus Q$ be ordered based on the placement of its variables in $\alpha$. Then $Q$ is also witnessed by the variable order $\alpha'=Q\concat V'$.
\end{OBS}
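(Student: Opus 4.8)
The plan is to verify the root-set condition for $\alpha'$ one variable at a time, reusing the witnessing constraints supplied by the original order $\alpha$. Since $\alpha'$ places all of $Q$ first and then the variables of $V'$ in the order they had in $\alpha$, and since the root-set condition only quantifies over variables outside $Q$, there is nothing to check for variables in $Q$. So I would fix an arbitrary $v\in V'=V(\cspi)\setminus Q$ and argue only about it, writing $A$ for the set of variables weakly preceding $v$ in $\alpha$ and $B$ for the set of variables weakly preceding $v$ in $\alpha'$.

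The first key step is a purely set-theoretic comparison of the two prefixes: I would show $A\subseteq B$ with $v\in A$. This holds because $A$ consists of the variables of $Q$ that precede $v$ in $\alpha$ together with the $V'$-variables weakly preceding $v$, whereas $B$ consists of \emph{all} of $Q$ together with exactly the same set of $V'$-variables (the relative order of $V'$ being inherited from $\alpha$); hence $B\supseteq A$. This comparison is routine bookkeeping.

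The crux, and the one place that needs a genuine argument, is a monotonicity property of functionality under enlarging the projection set: if $A\subseteq B$, $v\in A$, and $\proj{c}{A}$ is functional on $v$, then $\proj{c}{B}$ is functional on $v$. I would prove this by contraposition. Two tuples of $\proj{c}{B}$ that differ only at $v$ are images of tuples $t,t'\in R(c)$; projecting $t,t'$ down to the smaller scope $S(c)\cap A$ keeps them differing at $v$ while still agreeing everywhere else, since $(S(c)\cap A)\setminus\{v\}\subseteq (S(c)\cap B)\setminus\{v\}$. This produces two tuples of $\proj{c}{A}$ differing only at $v$, contradicting functionality of $\proj{c}{A}$ on $v$. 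The subtle points to get right are that $v$ stays in the scope after the coarser projection (guaranteed because $v\in A$ and functionality of $\proj{c}{A}$ on $v$ forces $v\in S(c)$), and that projections compose, i.e.\ $\proj{(\proj{c}{B})}{A}=\proj{c}{A}$.

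Finally I would assemble the pieces: $Q$ being witnessed by $\alpha$ yields, for my fixed $v$, a constraint $c$ with $\proj{c}{A}$ functional on $v$; combining $A\subseteq B$ with the monotonicity step shows the same $c$ makes $\proj{c}{B}$ functional on $v$, which is exactly the witnessing condition for $v$ under $\alpha'$. As $v\in V'$ was arbitrary, $\alpha'=Q\concat V'$ witnesses $Q$ as a root set. I expect the monotonicity claim to be the only real obstacle; everything else is straightforward reasoning about prefixes.
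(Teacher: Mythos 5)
Your proposal is correct and takes essentially the same route as the paper's proof: both fix a variable $v\notin Q$, reuse the constraint $c$ witnessing it under $\alpha$, and show that functionality on $v$ survives enlarging the projection set from the $\alpha$-prefix to the $\alpha'$-prefix (which contains it). The only difference is presentational---the paper argues this monotonicity step in the forward direction, while you argue by contraposition---and your extra care about $v$ remaining in the projected scope matches an assumption the paper leaves implicit.
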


\lv{
\begin{proof}
We need to show that $Q$ is a root set witnessed by $\alpha'=v_1<\dots<v_{|V|}$, i.e., that for each $i$ in $k<i\leq |V|$ there exists a constraint $c\in C$ such that $\proj{c}{\{v_1,\dots,v_i\}}$ is functional on $v_i$. Consider an arbitrary such variable $v_i$, and recall that since $Q$ is witnessed by $\alpha$, there must exist a constraint $c$ such that $c'=\proj{c}{Q_i\cup V_i}$ is functional on $v_i$, where $V'_i$ and $Q_i$ are, respectively, the elements of $Q$ and $V'$ which occur before $v_i$ in $\alpha$. Now, observe that the constraint $c^*=\proj{c}{Q\cup V_i}=\proj{c}{\{v_1,\dots,v_i\}}$ must also be functional on $v_i$---indeed, for each pair of tuples $t,t'$ in $R(c^*)$, either $t[v_i]=t'[v_i]$, or $t[v_i]\neq t'[v_i]$ and there exists a variable $z\in Q_i\cup V_i$ (and hence also in the scope of $c^*$) such that $t[z]\neq t'[z]$.
\qed \end{proof}
}

For ease of presentation, we will say that $\cspi$ is
$k$-\emph{rooted} if $k$ is the minimum integer such that $\cspi$ has
a root set of size $k$. It is easy to see, and also follows from the
work of David~\cite{David95} and Cohen et al.~\cite{CohenCGM11}, that
for every fixed $k$ the class of $k$-rooted CSP instances is polynomial-time
solvable: generally speaking, one can first loop through and test all
variable-subsets of size at most $k$ to find a root $Q$, and then loop
through all assignments $Q\rightarrow D$ to get a set of functional
CSP instances, each of which can be solved separately in linear time.

While even $1$-rooted CSP instance can have unbounded fractional
hypertree width (see also the discussion of Cohen et
al.~\cite{CohenCGM11}), the class of $k$-rooted CSP instances for a fixed value $k$ is, in some sense, not very robust. Indeed, consider the CSP instance $\mathcal{W}=\tuple{\{v_1,\dots,v_n\},\{0,1\},\{c\}}$ where $c$ ensures that precisely a single variable is set to $1$ (i.e., its relation can be seen as an $n\times n$ identity matrix). In spite of its triviality, it is easy to verify that $\mathcal{W}$ is not $k$-rooted for any $k<n-2$.

Let us now consider the following alternative to measuring the size of root sets in a CSP instance $\cspi$. A constraint set $P$ is a \emph{constraint-root set} if $\bigcup_{c\in P}S(c)$ is a root set, and $\cspi$ is then called $k$-constraint-rooted if $k$ is the minimum integer such that $\cspi$ has a constraint-root set of size $k$. Since we can assume that each variable occurs in at least one constraint, every $k$-rooted CSP also has a constraint-root set of size at most $k$; on the other hand, the aforementioned example of $\mathcal{W}$ shows that an instance can be $1$-constraint-rooted while not being $k$-rooted for any small $k$. The following result, which we prove by using join decompositions and joinwidth, thus gives rise to strictly larger tractable classes than those obtained via root sets:

\begin{PRO}
\label{pro:rootset}
For every fixed $k\in \mathbb{N}$, every $k$-constraint-rooted
CSP instance has joinwidth at most $k$ and can be solved in time $|\cspi|^{\bigoh(k)}$.
\end{PRO}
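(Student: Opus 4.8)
The plan is to turn a constraint-root set of size $k$ into a join decomposition of width at most $k$, and then read off the running time from Theorem~\ref{the:jt}. Fix a constraint-root set $P\subseteq C$ with $|P|=k$ such that $Q=\bigcup_{c\in P}S(c)$ is a root set, and use the preceding observation to fix a witnessing variable order $v_1<\dots<v_n$ in which the variables of $Q$ come first; write $q=|Q|$, so $Q=\{v_1,\dots,v_q\}$, and for each $i>q$ let $c_i\in C$ witness functionality on $v_i$ (i.e.\ $\proj{c_i}{\{v_1,\dots,v_i\}}$ is functional on $v_i$) and put $A_i=S(c_i)\cap\{v_1,\dots,v_{i-1}\}$. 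For fixed $k$ such a $P$ and such an order can be found by brute force in time $|\cspi|^{\bigoh(k)}$, and a width-$k$ decomposition yields a solution in time $\bigoh(|\cspi|^{2k+4})$ by Theorem~\ref{the:jt}; hence the stated running time follows once the decomposition is constructed.

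For the width bound I would use Lemma~\ref{lem:jtp-part-sol}, which gives $C(j)=\proj{\SOL(\cspi[V(j)])}{S(j)}$ and therefore $|C(j)|\le|\SOL(\cspi[V(j)])|$, reducing everything to bounding $|\SOL(\cspi[W])|$ by $\noftup(\cspi)^{k}$ for the sets $W=V(j)$ that occur. Two facts drive this. First, for \emph{every} $W$ the projection of $\SOL(\cspi[W])$ onto $Q\cap W$ is contained in $\join_{c\in P}\proj{c}{W}$, a relation over $Q\cap W$ with at most $\prod_{c\in P}|c|\le\noftup(\cspi)^{k}$ tuples; this holds unconditionally because $\cspi[W]$ contains the projections of the constraints in $P$. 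Second, call $W$ \emph{functionally closed} if $A_i\subseteq W$ for every $v_i\in W\setminus Q$. If $W$ is functionally closed, then the restriction map $\SOL(\cspi[W])\to\proj{\SOL(\cspi[W])}{Q\cap W}$ is injective: processing the variables of $W$ along the order $v_1<\dots<v_n$, any two solutions agreeing on $Q\cap W$ must agree on each $v_i\in W\setminus Q$, because both satisfy $\proj{c_i}{W}$ and the functionality of $\proj{c_i}{\{v_1,\dots,v_i\}}$ on $v_i$ survives the restriction to $W$ exactly because $A_i\subseteq W$. Combining the two facts, $|\SOL(\cspi[W])|\le\noftup(\cspi)^{k}$ whenever $W$ is functionally closed, so $\jw(j)\le k$ for every node $j$ with $V(j)$ functionally closed.

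It then remains to build a decomposition in which every node is covered by the above, and here I would exploit a second, cost-free source of small nodes: any node $j$ whose subtree contains at most $k$ leaves satisfies $|C(j)|\le |C_{\emph{naive}}(j)|\le\prod_{c\in X(j)}|c|\le\noftup(\cspi)^{k}$, hence $\jw(j)\le k$ irrespective of $V(j)$. Thus it suffices to construct a join decomposition in which every node whose subtree has more than $k$ leaves has a functionally closed $V(j)$. The plan is to lay out the constraints along the variable order, placing $P$ at the bottom so that $Q$ (and with it the bound from the first fact) is available at every node above it, forming the ``spine'' of the tree out of functionally closed constraint sets (which are closed under union and have $C$ as top element), and absorbing the remaining constraints into small ``bridge'' subtrees of at most $k$ leaves each.

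The main obstacle is precisely this last construction. Since a witnessing constraint $c_i$ for a low-indexed variable may itself contain higher-indexed variables, the naive layouts (for instance ordering constraints by the largest variable in their scope) can create large subtrees $V(j)$ that fail to be functionally closed, and the family of functionally closed constraint sets is not closed under intersection, so there is no canonical small closed set to attach to. This is exactly where the at-most-$k$-leaf exemption becomes essential: it gives the slack to route the constraints whose witnesses ``reach upward'' through bounded bridge subtrees while keeping every large subtree functionally closed. Verifying that the constraints can always be arranged in this way — so that the spine passes only through functionally closed sets and every deviation is confined to a subtree of at most $k$ leaves — is the technical heart of the proof.
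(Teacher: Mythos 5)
Your preparatory material is essentially correct: Fact~1 (the projection of $\SOL(\cspi[W])$ onto $Q\cap W$ has at most $\noftup(\cspi)^{k}$ tuples), Fact~2 (for functionally closed $W$ the restriction to $Q\cap W$ is injective on $\SOL(\cspi[W])$), the exemption for nodes with at most $k$ leaves, and the running-time part (brute force over $k$-subsets of $C$ plus Theorem~\ref{the:jt}) all hold, and via Lemma~\ref{lem:jtp-part-sol} they reduce the width bound to an arrangement problem. But the proposal stops exactly where the proof has to start: you explicitly leave open ``verifying that the constraints can always be arranged in this way,'' and that missing step is not a routine verification --- it is the entire content of the width bound. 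So as it stands this is a reduction of the proposition to an unproved combinatorial claim, i.e.\ a genuine gap.

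Worse, the claim you would need is false: the invariant ``every node with more than $k$ leaves has functionally closed $V(j)$'' cannot always be realized, so the plan cannot be completed as stated. Take $k=1$, $n=2m\geq 8$, variables $v_1<\dots<v_n$, and constraints $c_0$ with scope $\{v_1\}$ and relation $\SB (a) \SM a\in[N]\SE$ (so $P=\{c_0\}$, $Q=\{v_1\}$), $c_i$ with scope $(v_{i-1},v_i,v_{i+m})$ and relation $\SB (a,a,b)\SM a,b\in[N]\SE$ for $2\leq i\leq m$, and $c_i$ with scope $(v_{i-1},v_i)$ and relation $\SB (a,a)\SM a\in[N]\SE$ for $m<i\leq n$. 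Here $A_i=\{v_{i-1}\}$ for every $i\geq 2$, so the functionally closed sets are exactly the prefixes $\{v_1,\dots,v_j\}$, and one checks easily that no union of \emph{two} constraint scopes is a prefix. Since every join decomposition contains an internal node both of whose children are leaves, every decomposition of this instance has a node with more than $k=1$ leaves that is not functionally closed; your requirement fails universally. Nevertheless $\jw(\cspi)\leq 1$: in the linear decomposition $c_0,c_2,c_3,\dots,c_n$ the unclosed ``extra'' variables $v_{2+m},\dots,v_{t+m}$ present at an intermediate node are chained to one another by the projections of the \emph{not-yet-processed} binary constraints (this is exactly what pruning buys, since $\cspi[V(j)]$ in Lemma~\ref{lem:jtp-part-sol} contains projections of all constraints, not only the processed ones), so they contribute a single factor $N$ and $|C(j)|\leq N\cdot N=N^{2}=\noftup(\cspi)$. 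This is also where you diverge from the paper: its proof builds a linear join decomposition with $P$ at the bottom and appends the remaining constraints along the root-set order, arguing directly that joining a constraint whose projection onto the already-covered variables is functional on the next variable does not increase the number of stored tuples; no closure condition is imposed, and the effect of upward-reaching scopes is absorbed by pruning rather than avoided by the layout. To salvage your route you would have to replace exact closure by an accounting that lets unclosed variables ride along whenever pruning ties them into few components --- which is precisely the argument that is missing.
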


\begin{proof}
Consider a CSP instance $\cspi$ with a constraint-root set $P$ of size
$k$. We argue that $\cspi$ has a linear join decomposition of width at most $k$
where the elements of $P$ occur as the leaves farthest from the
root. Indeed, consider the linear join decomposition $(J,\jtbi)$ constructed in
a bottom-up manner, as follows. First, we start by gradually adding
the constraints in $P$ as the initial leaves. At each step after that,
consider a node $j$ which is the top-most constructed node in the
join decomposition. By definition, there must exist a variable $v$ and a
constraint $c$ such that $\proj{c}{\bigcup_{c\in P} S(c)}$ is
functional on $v$. Moreover, this implies that $|\proj{c}{\bigcup_{c\in
    P} S(c)}\join C(j)|\leq |C(j)|$, and thus $|c\join C(j)|\leq |C(j)|$. Hence 
    this procedure does not increase the size of constraints at nodes after the initial $k$ constraints, immediately resulting in the desired bound of $k$ on the width of $(J,\jtbi)$.

To complete the proof, observe that a join decomposition with the properties outlined above can be found in time at most $|\cspi|^{\bigoh(k)}$: indeed, it suffices to branch over all $k$-element subsets of $C(\cspi)$ and test whether the union of their scopes is functional using, e.g., the result of Cohen et al.~\cite[Corollary 1]{CohenCGM11}. Once we have such a join decomposition, we can solve the instance by invoking Theorem~\ref{the:jt}.
\qed \end{proof}

As a final remark, we note that the class of $k$-constraint rooted CSP
instances naturally includes all instances which contain $k$
constraints that are in conflict (i.e., which cannot all be satisfied
at the same time).  

\subsection{Other Tractable Classes}
\label{sub:otherclasses}
\newcommand{\hkb}{hereditarily $k$-bounded}
Here, we identify some other classes of tractable CSP instances with
bounded joinwidth. First of all, we consider CSP instances such that
introducing their variables in an arbitrary order always results in a
subinstance with polynomially many solutions. In particular, we call a
CSP instance $\cspi$ \emph{\hkb} if for every subset $V'$ of its
variables it holds that $|\SOL(\cspi[V'])|\leq
\noftup(\cspi)^k$. Examples of \hkb{} CSP instances include $k$-Turan
CSPs~\cite[page 12]{CohenCGM11} and CSP instances with  fractional edge covers of weight $k$~\cite{GroheMarx14}.

\begin{PRO}
\label{pro:hereditarybounded}
The class of \hkb{} CSP instances has joinwidth at most $k$ and can be solved in time at most $\bigoh(|\cspi|^k)$.
\end{PRO}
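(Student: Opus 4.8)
The plan is to observe that \hkb{}ness is precisely the hypothesis needed to cap the width at \emph{every} node of \emph{every} join decomposition, so that no search over decompositions is necessary and essentially any decomposition certifies the bound. First I would fix an arbitrary join decomposition $\JJJ=(J,\jtbi)$ for $\cspi$; such a decomposition always exists (take any rooted binary tree whose leaves are put in bijection with $C$). Let $j\in V(J)$ be an arbitrary node. By Lemma~\ref{lem:jtp-part-sol} we have $C(j)=\proj{\SOL(\cspi[V(j)])}{\overline{V}(j)}$, so $C(j)$ is a projection of $\SOL(\cspi[V(j)])$ and therefore $|C(j)|\leq |\SOL(\cspi[V(j)])|$ (projection can only merge or drop tuples, never create new ones).

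Next I would invoke the hereditary assumption. Since $V(j)$ is a subset of $V(\cspi)$ and $\cspi$ is \hkb{}, applying the defining inequality with $V'=V(j)$ gives $|\SOL(\cspi[V(j)])|\leq \noftup(\cspi)^{k}$. Chaining the two inequalities yields $|C(j)|\leq \noftup(\cspi)^{k}$, whence $\jw(j)=\log_{\noftup(\cspi)}|C(j)|\leq k$. As $j$ ranged over all nodes and $\JJJ$ over all join decompositions, we obtain $\jw(\JJJ)\leq k$ for every $\JJJ$, and in particular $\jw(\cspi)\leq k$.

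For the algorithmic statement the key point is that we never need to \emph{compute} a good decomposition: the argument above shows that \emph{every} join decomposition already has width at most $k$. Hence I would simply construct an arbitrary join decomposition of $\cspi$ in linear time and feed it, together with $\omega=k$, to the algorithm of Theorem~\ref{the:jt}, which then solves $\cspi$ in time $\bigoh(|\cspi|^{2k+4})=|\cspi|^{\bigoh(k)}$, matching the claimed polynomial bound with exponent $\bigoh(k)$.

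As for difficulty, there is essentially no obstacle here; the proposition is a clean corollary of the two tools already available. The entire content is recognizing that the universal quantifier ``for every subset $V'$'' in the definition of \hkb{}ness lines up exactly with the family of induced instances $\cspi[V(j)]$ that appear along any join decomposition via Lemma~\ref{lem:jtp-part-sol}, so that the bound on solution counts transfers uniformly to every intermediate constraint irrespective of the tree's shape. The only step meriting a word of care is noting that $C(j)$ is a \emph{projection} of the relevant solution set, so its cardinality is at most that of the solution set itself.
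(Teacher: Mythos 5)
Your proposal is correct and follows essentially the same route as the paper's proof: both combine Lemma~\ref{lem:jtp-part-sol} with the hereditary bound applied to $V'=V(j)$ to conclude $|C(j)|\leq\noftup(\cspi)^k$ at every node, then invoke Theorem~\ref{the:jt}. The only (immaterial) difference is that the paper fixes an arbitrary \emph{linear} join decomposition while you observe that \emph{every} join decomposition has width at most $k$; your remark that the resulting running time is $\bigoh(|\cspi|^{2k+4})$ rather than literally $\bigoh(|\cspi|^k)$ is a fair reading, as the paper's own proof also only establishes the width bound and delegates the running time to Theorem~\ref{the:jt}.
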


\begin{proof}
Consider an arbitrary linear join decomposition $(J,\jtbi)$. By definition, for each $j\in V(J)$, $|\SOL(\cspi[V(j)])|\leq \noftup(\cspi)^k$. Then $|\proj{\SOL(\cspi[V(j)])}{\overline V(j)}|$ $\leq \noftup(\cspi)^k$, and by Lemma~\ref{lem:jtp-part-sol} we obtain $|C(j)|\leq \noftup(\cspi)^k$, as required.
\qed \end{proof}

Another example of a tractable class of CSP instances that we can
solve using joinwidth are instances where all  constraints interact in a way which forces a unique assignment of the variables. In particular, we say that a CSP $\cspi=\tuple{V,D,C}$ is \emph{unique at depth $k$} if for each constraint $c\in C$ there exists a \emph{fixing set} $C'\subseteq C$ such that $c\in C'$, $|C'|\leq k$, and $|(\join_{c'\in C'}c')|\leq 1$.

\begin{PRO}
\label{pro:uniquedepth}
The class of CSP instances which are unique at depth $k$ has joinwidth at most $k$ and can be solved in time at most $|\cspi|^{\bigoh(k)}$.
\end{PRO}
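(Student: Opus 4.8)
The plan is to mimic the proof of Proposition~\ref{pro:rootset}, exploiting the fact that the ``unique at depth $k$'' property provides, for each constraint, a small certificate forcing a size bound. The key observation is that if $c\in C'$ with $|C'|\le k$ and $|\join_{c'\in C'}c'|\le 1$, then joining $c$ together with the other constraints in $C'$ yields a relation of size at most one; pruning can only reduce sizes, so this gives us a way to keep intermediate constraints small provided we process the fixing sets cleverly.

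First I would construct a suitable (linear) join decomposition. The natural idea is to order the constraints so that we repeatedly ``resolve'' one constraint at a time: pick a constraint $c$, pull in its fixing set $C'$ (at most $k$ constraints), join them to collapse to at most a single tuple, and continue. Concretely, I would build the decomposition in stages, where each stage processes one constraint $c$ together with its fixing set. Since a fixing set has size at most $k$, at any point the ``active'' portion of the decomposition involves at most $k$ constraints beyond what has already been collapsed to size $\le 1$. This means every node $j$ in the decomposition satisfies $|C(j)|\le \noftup(\cspi)^{k}$, because $C(j)$ is a projection-and-pruning of a join over at most $k$ constraints, each with at most $\noftup(\cspi)$ tuples (and in fact pruning will typically drive this far lower once a fixing set completes). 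By Lemma~\ref{lem:jtp-part-sol}, $C(j)=\proj{\SOL(\cspi[V(j)])}{\overline V(j)}$, so the bound $\jw(j)\le k$ follows once we argue $|\SOL(\cspi[V(j)])|\le \noftup(\cspi)^{k}$ at the critical nodes.

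The main obstacle I anticipate is the combinatorial bookkeeping of the decomposition order: the fixing sets for different constraints may overlap arbitrarily, so naively chaining them does not obviously keep the ``active'' constraint count bounded by $k$. The cleanest route is probably to argue about solution counts directly via Lemma~\ref{lem:jtp-part-sol} rather than tracking join sizes node-by-node. Specifically, for the linear decomposition introducing constraints one at a time, I would show that at each node $j$ the relation $C(j)$ is bounded by $\noftup(\cspi)^{k}$ by noting that any tuple of $\SOL(\cspi[V(j)])$ is determined, up to a factor of $\noftup(\cspi)^{k}$, by the values it assigns to the variables of the $k$ constraints in an appropriate fixing set that is not yet fully contained in $X(j)$: once a fixing set is entirely inside $X(j)$, those variables are pinned to a unique value, collapsing the count. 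This is analogous to how Proposition~\ref{pro:rootset} uses functionality to prevent growth, with fixing sets playing the role of the functional certificates.

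For the running time and algorithmic side, I would follow Proposition~\ref{pro:rootset} and Proposition~\ref{pro:hereditarybounded}: we can identify the fixing sets by branching over all $k$-element subsets of $C(\cspi)$ (there are $|\cspi|^{\bigoh(k)}$ of them) and testing the join size, then assemble the linear join decomposition and invoke Theorem~\ref{the:jt}, which solves the instance in time $|\cspi|^{\bigoh(\omega)}=|\cspi|^{\bigoh(k)}$ once a width-$k$ decomposition is in hand. The hard part is thus purely the structural claim that the decomposition has width at most $k$; everything downstream is a direct appeal to the already-established machinery.
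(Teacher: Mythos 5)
Your proposal follows essentially the same route as the paper's proof: the same staged linear join decomposition that introduces each constraint's fixing set in turn, the same collapse argument showing (via Lemma~\ref{lem:jtp-part-sol} and the definition of a fixing set) that the node completing a fixing set stores at most one tuple, so that the at most $k$ fresh leaves per stage keep every intermediate node below $\noftup(\cspi)^k$, and the same $|\cspi|^{\bigoh(k)}$ enumeration of constraint subsets of size at most $k$ to find the fixing sets before invoking Theorem~\ref{the:jt}. The obstacle you flag about overlapping fixing sets is in fact a non-issue---overlaps only reduce the number of new constraints introduced in a stage, so the leaf count between consecutive collapse nodes is automatically at most $k$---and the paper's proof handles it with no extra machinery.
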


\lv{
\begin{proof}
First of all, we observe that it is possible to determine whether a CSP instance $\cspi$ is unique at depth $k$ in time at most $|\cspi|^{\bigoh(k)}$. Indeed, it suffices to check whether each constraint $c$ is contained in at least one fixing set---and to do that, we can simply loop over all constraint sets of arity at most $k$ and join them together (in an arbitrary order). If $\cspi$ contains a constraint that does not appear in any fixing set of size at most $k$, then it is not unique at depth $k$; otherwise, for each constraint $c$ we choose an arbitrary fixing set (containing $c$) and denote it as $\mathsf{Fix}(c)$.

Consider a linear join decomposition $\JJJ$ constructed as follows. First, we choose an arbitrary ordering of the constraints and denote these $c_1<\dots<c_m$. We begin our construction by having $\JJJ$ introduce the constraints which occur in $\mathsf{Fix}(c_1)$ (in an arbitrary order). Then, for each $2\leq i\leq m$, we introduce those constraints in $\mathsf{Fix}(c_i)$ which are not yet introduced in $\JJJ$ (also in an arbitrary order).

It remains to argue that $\JJJ$ has joinwidth at most $k$. Consider the node $j_1$ whose child is the last constraint occurring in $\mathsf{Fix}(c_1)$ (i.e., all constraints in $\mathsf{Fix}(c_1)$ occur as leaves which are descendants of $j_1$, and all constraints not in $\mathsf{Fix}(c_1)$ occur as leaves which are not descendants of $j_1$). By definition, $|\join_{c'\in \mathsf{Fix}(c_1)}c'|\leq 1$ and hence by Lemma~\ref{lem:jtp-part-sol} we obtain $|C(j_1)|\leq 1$. Moreover, since there are at most $k$ leaves below $j_1$, for each descendant $j_1'$ of $j_1$ we obtain that $|C(j_1')|<(\noftup(\cspi))^k$, as required. Now, consider an arbitrary $2\leq i\leq m$, and as before let $j_i$ be the node whose child is the last constraint occurring in $\mathsf{Fix}(c_i)$. Once again, $|C(j_i)|\leq 1$ by definition. Moreover, since there are at most $k$ leaves which occur ``between'' $j_i$ and $j_{i-1}$, $\JJJ$ contains at most $k-1$ non-leaf nodes between $j_i$ and $j_{i-1}$. Since $|C(j_{i-1})|\leq 1$, it follows that every node $j_i'$ between $j_i$ and $j_{i-1}$ also has $|C(j_1')|<(\noftup(\cspi))^k$. Hence $\JJJ$ indeed has joinwidth at most $k$, and the proposition follows by Theorem~\ref{the:jt}.
\qed \end{proof}
}

\section{Solving Bounded-Width Instances}\label{sec:solving}
This section investigates the tractability of CSP instances whose joinwidth is bounded by a fixed constant $\omega$. In particular, one can investigate two notions of tractability. The first one is the classical notion of \emph{polynomial-time tractability}, which asks for an algorithm of the form $|\cspi|^{\bigoh(1)}$. 
In this setting, the complexity of CSP instances of bounded joinwidth remains an important open problem. Note that the \NP{}-hardness
of the $\omega$-\COMPJT{} problem established in Theorem~\ref{the:comp-jt-np}
does not exclude polynomial-time tractability for CSP instances of
bounded joinwidth. For instance, tractability could still be
obtained with a suitable approximation algorithm for computing
join decompositions (as it is the case for fractional
hypertreewidth~\cite{Marx10b}) or by using an algorithm that does not require a join decomposition
of bounded width as input.

The second notion of tractability we consider is called \emph{fixed\hy parameter tractability} and asks for an algorithm of the form $f(k)\cdot |\cspi|^{\bigoh(1)}$, where $k$ is a numerical parameter capturing a certain natural measure of $\cspi$. Prominently, Marx investigated the fixed\hy parameter tractability of CSP and showed that CSP instances whose hypergraphs have bounded \emph{submodular width}~\cite{Marx13} are fixed\hy parameter tractable when $k$ is the number of variables. Moreover, Marx showed that submodular width is the most general structural property \emph{among those measured purely on hypergraphs} with this property.

Here, we obtain two single-exponential fixed\hy parameter algorithms
for CSP instances of bounded joinwidth (i.e., algorithms with a
running time of
$2^{\bigoh(k)}\cdot |\cspi|^{\bigoh(1)}$): one where $k$ is the number
of variables, and the other where $k$ is the number of
constraints. Since there exist classes of instances of bounded
joinwidth and unbounded submodular width (see
Observation~\ref{obs:fhtw}), this expands the frontiers of
(fixed-parameter) tractability for CSP. 


\newcommand{\dnm}{\alpha}
\newcommand{\mundef}{\mathbf{u}}
\newcommand{\exceed}{\infty}
\newcommand{\completecon}{\emptyset}

\paragraph{Parameterization by Number of Constraints.} To solve the case where $k$ is the number of constraints, our primary aim is to obtain a join decomposition of width at most $\omega$, i.e., solve the $\omega$-\COMPJT\ problem defined in Section~\ref{sec:jw}. Indeed, once that is done we can solve the instance by Theorem~\ref{the:jt}. 

%
\begin{THE}
\label{the:compjt-con}
  $\omega$-\COMPJT{} can be solved in time $\bigoh(4^{|C|}+2^{|C|}|\cspi|^{2\omega+1})$ and is hence
  fixed\hy parameter tractable parameterized by $|C|$, for a CSP instance
  $\cspi=\tuple{V,D,C}$.
\end{THE}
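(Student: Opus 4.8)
The plan is to solve $\omega$-\COMPJT{} by dynamic programming over the subsets of $C$. The starting point is the structural independence granted by Lemma~\ref{lem:jtp-part-sol}: the constraint $C(j)$ computed at a node $j$ of any join decomposition depends only on the set $X(j)$ of constraints occurring below $j$, and not at all on the internal shape of the subtree rooted at $j$, since $C(j)=\proj{\SOL(\cspi[V(j)])}{\overline{V}(j)}$ and both $V(j)$ and $\overline{V}(j)$ are determined by $X(j)$ alone. Consequently, for every subset $C'\subseteq C$ there is a well-defined constraint $C(C')$ equal to $C(j)$ for any node $j$ with $X(j)=C'$ (with $V(C')=\bigcup_{c\in C'}S(c)$ and $\overline V(C')=\bigcup_{c\notin C'}S(c)$), and a well-defined width $\log_{\noftup(\cspi)}|C(C')|$. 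I call $C'$ \emph{feasible} if it is the leaf-set of some subtree all of whose nodes have width at most $\omega$.

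I would then compute, for every $C'\subseteq C$ in order of increasing cardinality, a Boolean feasibility flag together with the relation $C(C')$ whenever $C'$ is feasible. A singleton $\{c\}$ is feasible exactly when $|C(\{c\})|\le\noftup(\cspi)^\omega$, where $C(\{c\})=\PR(\proj{c}{\overline V(\{c\})})$ is computed directly from $c$. For $|C'|\ge 2$, the set $C'$ is feasible if and only if (i) it admits a partition $C'=C_1'\sqcup C_2'$ into two nonempty feasible parts and (ii) $|C(C')|\le\noftup(\cspi)^\omega$. To decide this, I scan partitions of $C'$; as soon as I find one with both parts already marked feasible, I compute $C(C')$ as $\PR(\proj{C(C_1')\join C(C_2')}{\overline V(C')})$ from the stored relations $C(C_1')$ and $C(C_2')$, which yields the correct $C(C')$ precisely by the induction step of Lemma~\ref{lem:jtp-part-sol}, and then test its size. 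The decision answer is whether $C$ itself is feasible: if so, a join decomposition of width at most $\omega$ is produced by the standard traceback that recursively splits each feasible set along a witnessing partition; otherwise $\jw(\cspi)>\omega$.

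For correctness, note that in any join decomposition of width at most $\omega$ every subtree's leaf-set is feasible (by leaf-to-root induction along the tree), so $C$ is feasible; conversely, the traceback yields a decomposition in which every node covers a feasible set and hence has width at most $\omega$. The subtle point, which I expect to be the main obstacle, is ensuring that the algorithm is never forced to evaluate $C(C')$ for an infeasible and possibly exponential-size subset $C'$. This is guaranteed because in any valid decomposition every child subtree is itself valid, so only feasible sets can ever appear as children; accordingly I only compute $C(C')$ by joining the stored relations of two \emph{feasible} children, each of size at most $\noftup(\cspi)^\omega$. Thus every join I perform produces at most $\noftup(\cspi)^{2\omega}$ tuples, even in the case where the resulting $C(C')$ exceeds $\noftup(\cspi)^\omega$ (which merely marks $C'$ infeasible and is harmless, since an infeasible $C'$ is never reused as a feasible child).

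For the running time, there are $2^{|C|}$ subsets, and for each I perform at most one constraint computation: a join of two relations of size at most $\noftup(\cspi)^\omega$ (hence at most $\noftup(\cspi)^{2\omega}$ intermediate tuples), followed by a projection and a pruning against all constraints of $\cspi$, costing $\bigoh(|\cspi|^{2\omega+1})$ time analogously to the inner-node analysis in Theorem~\ref{the:jt}; summed over all subsets this gives the $2^{|C|}|\cspi|^{2\omega+1}$ term. The purely combinatorial search for a feasible partition contributes $\sum_{C'\subseteq C}2^{|C'|}\le 4^{|C|}$, accounting for the remaining term. Fixed-parameter tractability in $|C|$ then follows immediately.
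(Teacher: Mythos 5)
Your proposal is correct and takes essentially the same route as the paper's own proof: the identical dynamic program over subsets of $C$ whose entries are well-defined thanks to Lemma~\ref{lem:jtp-part-sol}, the same key observation that at most one join/projection/pruning evaluation is needed per subset (since any feasible partition yields the same relation, giving the additive $|\cspi|^{2\omega+1}$ term), and the same accounting of the partition enumeration by $4^{|C|}$. The only cosmetic differences are that you project onto $\overline{V}(C')$ rather than the equivalent boundary set $S(C')$, and that you also size-check singletons, which is harmless.
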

\lv{\begin{proof}
  Let $\cspi=\tuple{V,D,C}$ be the given CSP instance. For a subset
  $C' \subseteq C$, we denote by $S(C')$ the set of all variables that
  are both in the scope of a constraint in $C'$ and in the scope of a
  constraint in $C \setminus C'$.
  The algorithm uses dynamic programming to compute the mapping $\dnm$
  that for every non-empty subset $C'$ of $C$ either:
  \begin{itemize}
  \item is equal to $\proj{\SOL(\cspi[V(C')])}{S(C')}$, if there is a
    partial join decomposition for $\cspi$ that covers the constraints in $C'$ of width at most $\omega$, and
  \item otherwise is equal to $\exceed$.
  \end{itemize}
  Note that $\cspi$ has a join decomposition of width at most $\omega$ if and
  only if $\dnm(C)\neq \exceed$.

  To compute $\dnm$ we use the observation that a partial join decomposition
  covering a set $C'$ of constraints is either a leaf (if $|C'|=1$) or is
  obtained as the ``join'' of a partial join decomposition covering $C_0$ and a partial
  join decomposition covering $C' \setminus C_0$ for some non-empty $C_0 \subset C'$. This immediately implies that $\dnm$ satisfies the
  following recurrence relation for every non-empty subset $C'$ of $C$:
  \begin{itemize}
  \item[(R1)] if $C'=\{c\}$ for some $c \in C$, then $\dnm(C')=\PR(\proj{c}{S(C')})$,
  \item[(R2)] if $|C'|>1$ and there is a non-empty subset $C_0 \subset C'$ such that:
    \begin{itemize}
    \item[(C1)] $\dnm(C_0)\neq \exceed$ and $\dnm(C' \setminus C_0)\neq \exceed$, and
    \item[(C2)] $|\PR((\proj{\dnm(C_0)\join \dnm(C'\setminus C_0))}{S(C')})| \leq
      \noftup(\cspi)^\omega$,
    \end{itemize}
    then $\dnm(C')=\PR(\proj{(\dnm(C_0)\join \dnm(C\setminus
      C_0))}{S(C')})$,
  \item[(R3)] otherwise, i.e., if $|C'|>1$ and (R2) does not apply,
    then $\dnm(C')=\exceed$.
  \end{itemize}
  Since $\cspi$ has a join decomposition of width at most $\omega$ if and
  only if $\dnm(C)\neq \exceed$ and in this case a join decomposition for
  $\cspi$ can be easily constructed by following the recurrence
  relation starting from $\dnm(C)$, it only remains to show how to compute
  $\dnm$ in the required running time, which we do as follows.
  
  Initially, we
  set $\dnm(\{c\})=\PR(\proj{c}{S(C')})$ 
  for every $c \in C$ and $\dnm(C')=\exceed$
  for every $C'$ with $\emptyset \neq C' \subseteq C$ and $|C'|>1$.
  We then enumerate all subsets $C'$ with $C' \subseteq C$ and
  $|C'|>1$ in order of increasing size and check whether there is a subset
  $C_0$ of $C'$ satisfying the conditions stated in (R2). If so we set
  $\dnm(C')=\PR((\proj{\dnm(C_0)\join \dnm(C\setminus C_0))}{S(C')})$
  and otherwise we set $\dnm(C')=\exceed$. Clearly, the time required
  to initialize $\dnm$ is at most $\bigoh(2^{|C|}+|\cspi|)$. 
  Furthermore, the
  time required to check, whether a subset $C'$ satisfies the
  conditions stated in (R2) is at most
  $\bigoh(2^{|C'|}+|\cspi|^{2\omega+1})=\bigoh(2^{|C|}+|\cspi|^{2\omega+1})$, which
  can be obtained as follows. First note that because of
  Lemma~\ref{lem:jtp-part-sol}, it holds that $\PR(\proj{(\dnm(C_0)\join
    \dnm(C'\setminus C_0))}{S(C')})$ is equal to $\PR((\proj{\dnm(C_0')\join
    \dnm(C'\setminus C_0'))}{S(C')})$ for any two subsets $C_0$ and
  $C_0'$ satisfying (C1) and hence once we found a subset $C_0$
  satisfying (C1) and have computed $\PR((\proj{\dnm(C_0)\join
    \dnm(C'\setminus C_0))}{S(C')})$, we can determine whether (R2) or
  (R3) applies for $C'$. This implies that we have to compute
  $\PR((\proj{\dnm(C_0)\join \dnm(C'\setminus C_0))}{S(C')})$ for at
  most one subset $C_0$, which explains why $|\cspi|^{2\omega+1}$ only
  appears additively in the running time. Moreover, the term $2^{|C|}$
  is required for the enumeration of all subsets $C_0$ of $C'$.
  Since we have to enumerate
  all subsets $C'$ of $C$, we hence obtain
  $\bigoh(2^{|C|}(2^{|C|}+|\cspi|^{2\omega+1}))=\bigoh(4^{|C|}+2^{|C|}|\cspi|^{2\omega+1})$
  as the total running time for computing $\dnm$.
\qed \end{proof}}

From Theorem~\ref{the:compjt-con} and Theorem~\ref{the:jt} we immediately obtain:
\begin{COR}
  \label{cor:fptconst}
  A CSP instance $\cspi$ with $k$ constraints and joinwidth at most
  $\omega$ can be solved in time $2^{\bigoh(k)}\cdot |\cspi|^{\bigoh(\omega)}$.
\end{COR}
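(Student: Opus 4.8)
The plan is to simply chain the two preceding results, as the corollary is obtained by composition. First I would invoke Theorem~\ref{the:compjt-con} on $\cspi$ to compute a join decomposition $\JJJ$ of width at most $\omega$ in time $\bigoh(4^{|C|}+2^{|C|}|\cspi|^{2\omega+1})$. The key point justifying that this step succeeds is the hypothesis $\jw(\cspi)\leq \omega$: by Definition~\ref{def:width} this guarantees that a join decomposition of width at most $\omega$ exists, so the algorithm of Theorem~\ref{the:compjt-con} returns an actual decomposition rather than reporting that $\jw(\cspi)>\omega$.

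Second, with the decomposition $\JJJ$ now available, I would feed it into the algorithm of Theorem~\ref{the:jt}, which decides whether $\cspi$ has a solution in time $\bigoh(|\cspi|^{2\omega+4})$ whenever a width-$\omega$ join decomposition is supplied in its input. Composing the two phases yields a total running time of $\bigoh(4^{|C|}+2^{|C|}|\cspi|^{2\omega+1})+\bigoh(|\cspi|^{2\omega+4})$.

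To finish, I would simplify this expression by setting $k=|C|$ and absorbing each additive summand into the target bound. Concretely, $4^{|C|}=2^{2k}=2^{\bigoh(k)}$ and $2^{|C|}=2^{\bigoh(k)}$, while both $|\cspi|^{2\omega+1}$ and $|\cspi|^{2\omega+4}$ are of the form $|\cspi|^{\bigoh(\omega)}$. Hence each of the three additive terms is bounded by $2^{\bigoh(k)}\cdot|\cspi|^{\bigoh(\omega)}$, and a sum of constantly many such terms is again of this form, giving the claimed running time $2^{\bigoh(k)}\cdot|\cspi|^{\bigoh(\omega)}$.

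I do not expect a genuine obstacle here, since everything reduces to invoking two already-established theorems in sequence. The only point deserving a moment of care is the running-time bookkeeping: one must check that the additive $4^{|C|}$ term coming from the decomposition-computing phase is harmless (it carries no polynomial factor in $|\cspi|$, so it is cleanly absorbed into $2^{\bigoh(k)}$) and that the two slightly different polynomial exponents $2\omega+1$ and $2\omega+4$ both collapse uniformly to $\bigoh(\omega)$.
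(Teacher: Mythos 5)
Your proposal is correct and matches the paper's own reasoning exactly: the paper derives Corollary~\ref{cor:fptconst} precisely by chaining Theorem~\ref{the:compjt-con} (to compute a width-$\omega$ join decomposition, which must exist since $\jw(\cspi)\leq\omega$) with Theorem~\ref{the:jt} (to solve the instance given that decomposition), absorbing $\bigoh(4^{|C|}+2^{|C|}|\cspi|^{2\omega+1})+\bigoh(|\cspi|^{2\omega+4})$ into $2^{\bigoh(k)}\cdot|\cspi|^{\bigoh(\omega)}$. Your bookkeeping of the additive terms is exactly the ``immediately obtain'' step the paper leaves implicit.
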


\paragraph{Parameterization by Number of Variables.} Note that
Corollary~\ref{cor:fptconst} immediately establishes fixed\hy parameter
tractability for the problem when $k$ is the number of variables
(instead of the number of constraints), because one can assume that $|C|\leq 2^{|V|}$ for
every CSP instance $\cspi=(V,D,C)$. However, the resulting algorithm
would be double-exponential in $|V|$. The following theorem shows that
this can be avoided by designing a dedicated algorithm for \CSP{}
parameterized by the number of variables. The main idea behind both
algorithms is dynamic programming, however, in contrast to the
algorithm for $|C|$, the table entries for the fpt-algorithm for
$|V|$ correspond to subsets of $V$ instead of subsets of
$C$. Interestingly, the fpt-algorithm for $|V|$ does not explicitly
construct a join decomposition, but only implicitly relies on the existence of one.

\begin{THE}
  \label{the:fptvar}
  A CSP instance $\cspi$ with $k$ variables and joinwidth at most
  $\omega$ can be solved in time $2^{\bigoh(k)}\cdot |\cspi|^{\bigoh(\omega)}$.
\end{THE}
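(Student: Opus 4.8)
The plan is to solve the instance by dynamic programming over the subsets of $V$, in the spirit of the algorithm of Theorem~\ref{the:compjt-con} but with table entries indexed by variable sets rather than constraint sets; this keeps the number of entries at $2^{|V|}=2^{k}$ and thereby avoids the double-exponential blow-up one would get by first bounding $|C|$ by $2^{|V|}$ and then invoking Corollary~\ref{cor:fptconst}. For $W\subseteq V$ let $\partial W$ denote the \emph{boundary} of $W$, i.e., the set of variables of $W$ that occur together with some variable outside $W$ in the scope of a common constraint. For each $W$ the algorithm stores the relation $M(W):=\proj{\SOL(\cspi[W])}{\partial W}$ whenever $|M(W)|\le \noftup(\cspi)^{\omega}$, and otherwise marks $W$ as \emph{exceeded}. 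Since $\partial V=\emptyset$ and $\SOL(\cspi[V])=\SOL(\cspi)$, the entry for $W=V$ decides the instance, so it suffices to compute $M(V)$ correctly (and, by the size bound below, $V$ will not be exceeded).

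First I would set up the recurrence. The base entries are the sets $W=S(c)$ for $c\in C$, for which $M(W)$ is read off directly from the pruned constraint $\PR(c)$. For larger $W$ the algorithm ranges over all ways of writing $W=W_{1}\cup W_{2}$ with $W_{1},W_{2}\subsetneq W$ (processing the $W$ in order of increasing size, so that both parts are already available), and for each split forms the candidate $\PR(\proj{M(W_{1})\join M(W_{2})}{\partial W})$, keeping it as $M(W)$ if its size stays within $\noftup(\cspi)^{\omega}$. The running time is then immediate: there are at most $2^{k}$ sets $W$ and at most $\sum_{W}3^{|W|}=4^{k}$ splits in total, while each candidate is obtained by one join of two relations of size at most $\noftup(\cspi)^{\omega}\le|\cspi|^{\omega}$, one projection, and one pruning against all constraints, i.e.\ in time $|\cspi|^{\bigoh(\omega)}$. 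Altogether this gives $2^{\bigoh(k)}\cdot|\cspi|^{\bigoh(\omega)}$, as required, and note that the algorithm never explicitly builds a join decomposition.

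The size guarantee is supplied by the assumption $\jw(\cspi)\le\omega$ together with Lemma~\ref{lem:jtp-part-sol}: fixing a width-$\omega$ join decomposition $\JJJ$, every node $j$ satisfies $|C(j)|=|\proj{\SOL(\cspi[V(j)])}{S(j)}|\le\noftup(\cspi)^{\omega}$, and since $\partial V(j)\subseteq S(j)$ we obtain $|M(V(j))|\le|C(j)|\le\noftup(\cspi)^{\omega}$. Thus, provided the recurrence is correct, the entries $M(V(j))$ are never exceeded, and inducting along $\JJJ$ from the leaves (the constraint scopes) up to the root (the set $V$) shows that $M(V)$ is computed. Correctness of the recurrence itself is the variable-indexed analogue of Lemma~\ref{lem:jtp-part-sol}, and I expect it to be the main obstacle.

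The delicate point is that $\partial W$ is an \emph{intrinsic} boundary, whereas the decomposition works with the generally larger boundary $S(j)=V(j)\cap\overline{V}(j)$. Projecting onto $\partial W$ discards every variable that lies inside $W$ yet interacts with the rest of the instance only through constraints whose scope is also contained in $W$; the danger is a constraint $c$ with $S(c)\subseteq W$ whose scope is split by the chosen $W_{1},W_{2}$, since then neither $M(W_{1})$ nor $M(W_{2})$ fully enforces $c$ and pruning after projecting onto $\partial W$ can no longer reach it. The way around this is to argue that such \emph{internal} constraints can always be pushed strictly below the current level---that is, to restrict attention to splits in which every constraint with scope inside $W$ lies entirely in $W_{1}$ or entirely in $W_{2}$, so that it is already absorbed into a child entry---whereas the remaining \emph{external} constraints only restrict variables of $\partial W$ and are therefore re-established by the final pruning step. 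I would prove that $\partial W$ is a sufficient statistic for $W$ (every interaction with $V\setminus W$ factors through $\partial W$), that such a split exists whenever $W$ equals a set $V(j)$ of a bounded-width decomposition, and handle the residual case in which no proper split isolates the internal constraints by observing that there $\SOL(\cspi[W])$ is already small. Verifying this combination of boundary bookkeeping and case analysis---rather than the routine size and running-time estimates---is where the real work lies.
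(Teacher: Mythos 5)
Your algorithm is the same as the paper's: a dynamic program over subsets $W\subseteq V$, with intended entry $\proj{\SOL(\cspi[W])}{S(W)}$ (projection onto the intrinsic boundary $S(W)=\partial W$), base cases given by constraint scopes, candidates of the form $\PR(\proj{M(W_1)\join M(W_2)}{S(W)})$ checked against the threshold $\noftup(\cspi)^{\omega}$, completeness argued by bottom-up induction along a width-$\omega$ join decomposition via Lemma~\ref{lem:jtp-part-sol}, and the same $2^{\bigoh(k)}\cdot|\cspi|^{\bigoh(\omega)}$ accounting. Where you differ is in the treatment of soundness of the recurrence. The paper allows arbitrary covers $W=W_1\cup W_2$ and asserts that soundness follows ``by arguments very similar to Lemma~\ref{lem:jtp-part-sol}''; you instead single this out as the crux, and your worry is genuine, not hypothetical: take $c_1$ on $(a,u)$ with the single tuple $(0,5)$, $c_2$ on $(b,u)$ with the single tuple $(1,5)$, $c_3$ on $(a,b)$ forcing $a=b$, and a constraint linking $u$ to an outside variable. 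With $W=\{a,b,u\}$, $W_1=S(c_1)$, $W_2=S(c_2)$, we get $S(W)=\{u\}$, the candidate is $\{(5)\}$, yet $\cspi[W]$ has no solution---the cut constraint $c_3$ is enforced by neither child, and after projection onto $\{u\}$ pruning checks only the vacuous $\proj{c_3}{\{u\}}$. So for unrestricted covers the recurrence really is unsound (and the candidate really does depend on the cover), which is exactly the phenomenon you describe.

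However, your proposal does not close this gap; it relocates it, and the patch you sketch has problems of its own. First, restricting to covers in which every constraint with scope inside $W$ lies entirely in $W_1$ or entirely in $W_2$ does not by itself give soundness: the witnesses $\theta_1\in\SOL(\cspi[W_1])$ and $\theta_2\in\SOL(\cspi[W_2])$ behind a join tuple are only forced to agree on $S(W_1)\cap S(W_2)$, whereas $W_1\cap W_2$ may contain variables outside this set (variables all of whose constraints have scope inside one part); gluing $\theta_1$ and $\theta_2$ into a solution of $\cspi[W]$ then still needs an argument, which you never give. Second, the restriction endangers completeness: for a node $j$ of a width-$\omega$ join decomposition with children $j_1,j_2$, the natural cover $(V(j_1),V(j_2))$ can itself cut an internal constraint, namely a constraint $c\notin X(j)$ with $S(c)\subseteq V(j)$ straddling $V(j_1)$ and $V(j_2)$ (already duplicate constraints produce this), so your claim that an admissible split ``exists whenever $W$ equals a set $V(j)$'' is false as stated, and your fallback for the residual case (``there $\SOL(\cspi[W])$ is already small'') is unsubstantiated---and beside the point, since the failure mode is correctness of the entry, not its size. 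You yourself designate this combination of claims as ``where the real work lies'' and then do not carry it out, so the proposal is incomplete at exactly its self-identified crux. For contrast, the paper's constraint-indexed DP (Theorem~\ref{the:compjt-con}) avoids both difficulties structurally: a cover by constraint sets never cuts a constraint, and every variable of $V(C_0)\cap V(C'\setminus C_0)$ automatically lies in both boundaries, which is precisely what is lost when the table is indexed by variable sets.
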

\lv{
\begin{proof}
  Let $\cspi=\tuple{V,D,C}$ be the given CSP instance with
  $\jw(\cspi)\leq \omega$.
    For a subset $V' \subseteq V$, we denote by $S(V')$ the set of
  variables in $V' \cap \SB v \SM c \in C \land S(c)\setminus
  V'\neq \emptyset \land v \in S(c)\SE$ (i.e., $S(V')$ contains variables in $V'$ which occur in the scope of a constraint that also has variables outside of $V'$).
  The algorithm uses dynamic programming to compute a mapping $\dnm$
  which maps non-empty subsets $V'$ of $V$ to either the 
  constraint $\proj{\SOL(\cspi[V'])}{S(V')}$, or to $\exceed$.
  $\dnm$ is defined using the following recurrence:
  \begin{itemize}
  \item[(R1)] if $V'=S(c)$ for some $c \in C$, then $\dnm(V')=\PR(\proj{c}{S(V')})$,
  \item[(R2)] if (R1) does not apply and there are subsets $V_0$ and $V_1$ with $\emptyset \neq
    V_0,V_1 \subset V'$ and $V_0\cup V_1=V'$ such that:
    \begin{itemize}
    \item[(C1)] $\dnm(V_0)\neq \exceed$ and $\dnm(V_1)\neq \exceed$, and
    \item[(C2)] $|\PR(\proj{\dnm(V_0)\join \dnm(V_1)}{S(V')})| \leq
      \noftup(\cspi)^\omega$,
    \end{itemize}
    then $\dnm(V')=\PR(\proj{\dnm(V_0)\join \dnm(V_1)}{S(V')})$,
  \item[(R3)] otherwise, i.e., if neither (R1) nor (R2) applies,
    then $\dnm(V')=\exceed$.
  \end{itemize}
  Using arguments very similar to the ones used in the proof of
  Lemma~\ref{lem:jtp-part-sol} it is now easy to show that
  if $\dnm(V')\neq \exceed$, then
  $\dnm(V')=\proj{\SOL(\cspi[V'])}{S(V')}$ for every subset $V'
  \subseteq V$.
  
  It follows that if $\dnm(V)\neq \exceed$, then $\cspi$ has a
  solution if and only if $\dnm(V)$ is not equal to the empty
  constraint. Hence an algorithm that computes $\dnm$ can be used to
  solve every CSP instance for which $\dnm(V)\neq \exceed$. In order to
  show that we can solve a CSP instance $\cspi$ of joinwidth at most $\omega$, it is hence sufficient to show that
  $\dnm(V)\neq \exceed$. 
  
  To this end, consider a
  join decomposition $\JJJ=(J,\jtbi)$ for $\cspi$ of width at most
  $\omega$. Using a bottom-up induction on $J$, we will show that
  $\dnm(V(j))\neq \exceed$ and moreover $C(j)=\dnm(V(j))$ for every $j
  \in V(j)$. Since $\dnm(V)=C(r)\neq \exceed$ for the root $r$ of $J$,
  this then implies the desired property.
  $C(j)=\dnm(V(j))$ clearly holds for every leaf $l$ of $J$, because $V(l)$ satisfies (R1) and by 
  Lemma~\ref{lem:jtp-part-sol} we have
  $C(l)=\proj{\SOL(\cspi[V(l)])}{S(l)}=\proj{\SOL(\cspi[V(l)])}{S(V(l))}$.
  Now, consider an inner node $j$ of $J$ with children $j_1$ and
  $j_2$. Then setting $V'$ to $V(j)$, $V_0$ to $V(j_1)$, and $V_1$ to $V(j_2)$
  satisfies (C1) (of (R2)) because of the induction
  hypothesis. Moreover, (C2) is also satisfied because
  $\PR(\proj{\dnm(V_0)\join
    \dnm(V_1)}{S(V')})=\proj{\SOL(\cspi[V'])}{S(V')}=\proj{\SOL(\cspi[V(j)])}{S(j)}=C(j)$
  and $|C(j)|\leq \noftup(\cspi)^\omega$. Hence an algorithm that
  computes $\dnm$ can be used to solve every CSP instance of bounded
  joinwidth, and it only remains to show how to compute $\dnm$.
  
  Initially, we set $\dnm(V')=\PR(c)$ for every $V'\subseteq V$ such
  that there is constraint $c \in C$ with $V'= S(c)$ and
  $\dnm(V')=\exceed$ for every $V' \subseteq V$ for which this is not the case.
  We then enumerate all subsets $V'$ with $\emptyset \neq V' \subseteq V$ 
  in order of increasing sizes and if $\dnm(V')=\exceed$, we check whether there are subsets
  $V_0$ and $V_1$ of $V'$ satisfying the conditions stated in (R2). If so we set
  $\dnm(V')=\PR(\proj{\dnm(V_0)\join \dnm(V_1)}{S(V')})$
  and otherwise we set $\dnm(V')=\exceed$. Clearly, the time required
  to initialize $\dnm$ is at most $\bigoh(2^{|V|}+|C|)$. 
  Furthermore, the
  time required to check whether a subset $V'$ satisfies the
  conditions stated in (R2) is at most
  $\bigoh(2^{|V|}2^{|V|}+|\cspi|^{2\omega+1})=\bigoh(4^{|V|}+|\cspi|^{2\omega+1})$---indeed, this can be carried out as follows. To begin, we observe that 
  $\PR(\proj{\dnm(V_0)\join
    \dnm(V_1)}{S(V')})$ is equal to $\PR(\proj{\dnm(V_0')\join
    \dnm(V_1')}{S(V')})$ for any two pairs $(V_0,V_1)$ and
  $(V_0',V_1')$ of subsets satisfying (C1), and hence after we find a
  pair $(V_0,V_1)$ of subsets
  satisfying (C1) and compute $\PR(\proj{\dnm(V_0)\join
    \dnm(V_1)}{S(V')})$, we can determine whether (R2) or
  (R3) applies for $V'$. This implies that we have to compute
  $\PR(\proj{\dnm(V_0)\join \dnm(V_1)}{S(V')})$ for at
  most one pair $(V_0,V_1)$ of subsets, which explains why $|\cspi|^{2\omega+1}$ only
  appears additively in the running time. Moreover, the term $4^{|V|}$
  is required for the enumeration of all pairs $(V_0,V_1)$ for $V'$. Since we have to enumerate
  all subsets $V'$ of $V$, we obtain
  $\bigoh(2^{|V|}(4^{|V|}+|\cspi|^{2\omega+1}))=\bigoh(6^{|V|}+2^{|V|}|\cspi|^{2\omega+1})$,
  as the total running time for computing $\dnm$.
\qed \end{proof}
}

\section{Beyond Join Decompositions}\label{sec:beyond}

Due to their natural and ``mathematically clean'' definition, one might be tempted to
think that join decompositions capture all the algorithmic power offered by join and
projection operations. It turns out that this is not the case,
i.e., we show that if one is allowed to
use join and projections in an arbitrary manner (instead of the more
natural but also more restrictive way in which they are used within join
decompositions) one can solve CSP instances that are out-of-reach even for
join decompositions. This is interesting as it points towards the
possibility of potentially more powerful parameters based on join and
projections than joinwidth. 
\begin{THE}
\label{the:beyond}
  For every $\omega$, there exists a CSP instance $\cspi_\omega$ that can
  be solved in time $\bigoh(|\cspi|^4)$ using only join and projection operations but 
  $\jw(\cspi_\omega)\geq \omega$.
\end{THE}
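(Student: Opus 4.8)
The plan is to exhibit a single instance $\cspi_\omega$ that is simultaneously ``easy'' for an unconstrained sequence of joins and projections and ``hard'' for every join decomposition. For the hardness side I would reuse the tight link between joinwidth and branchwidth established in Lemma~\ref{lem:jw-bw}: I would build the skeleton of $\cspi_\omega$ from constraints whose relations are \emph{free} (e.g.\ complete, or at least unrestricted on their shared variables) placed on a hypergraph $H^\omega$ of branchwidth at least $2\omega$, so that in every join decomposition some balanced node $j$ is forced to have a boundary $S(j)$ of size $\Theta(\omega)$. The point of using free relations is that, by Lemma~\ref{lem:jtp-part-sol}, $C(j)=\proj{\SOL(\cspi_\omega[V(j)])}{S(j)}$, and freeness guarantees that this projection realises (essentially) all $\noftup(\cspi_\omega)^{\Omega(\omega)}$ assignments of the boundary variables, whence $\jw(j)\ge\omega$. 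A convenient concrete choice for $H^\omega$ is the incidence hypergraph used in the proof of Proposition~\ref{pro:compproj} (adapted from~\cite{AtseriasGroheMarx13}), which has the additional expansion property that \emph{every} balanced partition of its edges leaves $\Theta(\omega)$ boundary vertices; this makes the large-boundary node unavoidable for linear as well as for tree-shaped decompositions.

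On the tractability side I would augment this skeleton with a small amount of extra structure that lets an \emph{unconstrained} schedule certify (non-)satisfiability while only ever materialising relations of size $\bigoh(|\cspi|^{2})$, giving the claimed $\bigoh(|\cspi|^4)$ bound from a constant number of operations (one join of two size-$\bigoh(|\cspi|)$ relations together with a few projections, exactly as in the per-leaf analysis of Theorem~\ref{the:jt}). The feature I would exploit is precisely what a join decomposition cannot imitate: an unconstrained computation forms a \emph{directed acyclic} arrangement of operations, so a single constraint may be projected onto several different variable sets and each projection reused independently, and projections may be taken onto \emph{arbitrary} targets rather than the forced boundary $\overline{V}(j)$. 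Concretely I would add one ``key'' constraint whose distinct small projections suffice to resolve each part of the skeleton separately; the unconstrained algorithm computes these projections once and combines them locally, never joining the parts of the skeleton together, whereas a join decomposition is obliged to route the key through a single leaf and therefore to merge distinct parts at the large-boundary node identified above.

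The heart of the lower-bound argument is to show that this auxiliary structure does \emph{not} accidentally lower the joinwidth. This is delicate because a join decomposition is far from oblivious: at every node it works with $\cspi_\omega[V(j)]$, which contains the projection \emph{of every} constraint onto $V(j)$, and it additionally applies the global pairwise pruning $\PR$. Thus any inconsistency that becomes visible after projecting onto $V(j)$, or after a single round of semijoins, can be used by a clever decomposition to collapse $C(j)$ to the empty relation. I would therefore verify that at the critical balanced node the restricted instance $\cspi_\omega[V(j)]$ is still satisfiable with $\noftup(\cspi_\omega)^{\Omega(\omega)}$ distinct boundary tuples, i.e.\ that neither projecting the constraints onto $V(j)$ nor one pruning round removes the free behaviour on $S(j)$. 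Keeping the skeleton's relations free on their shared variables is exactly what buys this invariance.

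I expect this last point to be the main obstacle, and it is also where a naive construction fails. For instance, simply taking the disjoint union of a high-branchwidth free skeleton with a tiny unsatisfiable gadget does \emph{not} work: an adversarial join decomposition can introduce the two conflicting constraints at the very bottom of the tree, so that $V(j)$ contains their shared variable---hence $C(j)=\emptyset$---at every ancestor of that conflict, collapsing the width. The construction must therefore make the unconstrained advantage a genuine DAG/reuse phenomenon that is \emph{orthogonal} to what a single tree equipped with one pruning round can achieve, so that the large-boundary node survives in every decomposition while a constant-length unconstrained schedule still settles the instance. Verifying this orthogonality---that no tree schedule can recover the savings obtained by projecting and reusing the key constraint---is the crux of the proof.
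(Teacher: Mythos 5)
Your plan correctly isolates the right tension (DAG-style reuse of constraints versus one-leaf-per-constraint trees) and correctly rules out the naive disjoint-union gadget, but it stops exactly where the actual proof has to happen, and the mechanism you propose for the tractability side cannot be instantiated without destroying your own lower bound. Concretely: you want a single ``key'' constraint whose small projections resolve the parts of a free (complete-relation) skeleton. But any constraint $c$ has at most $\noftup(\cspi_\omega)$ tuples, and the induced instance $\cspi_\omega[V(j)]$ at a node $j$ contains $\proj{c}{V(j)}$; hence whenever $V(j)\subseteq S(c)$, every solution of $\cspi_\omega[V(j)]$ is a tuple of $\proj{c}{V(j)}$, so by Lemma~\ref{lem:jtp-part-sol} we get $|C(j)|\leq\noftup(\cspi_\omega)$ and $\jw(j)\leq 1$. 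So if the key's scope covers the skeleton (which it must, if its projections are to ``resolve each part''), the balanced high-width node you rely on no longer exists; and if its scope is small, or the key is split into several constraints, you are back to having to prove exactly the crux you explicitly leave open. This is a genuine gap, not a technicality: the entire difficulty of the theorem is designing relations whose global interaction is tight while every balanced induced subinstance stays loose, and your proposal does not exhibit such relations.

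The paper resolves this by distributing the ``key'' information across many binary constraints rather than concentrating it in one. Its instance has variables $x_1,\dots,x_n$ with $n=16\omega$, binary constraints enforcing $x_i<x_{i+1}$ along a path, and complete binary constraints on all non-consecutive pairs. The complete constraints play the role of your free skeleton, but their only function is to force $\overline{V}(j)=V$ at every node, so that projection never shrinks anything and $C(j)=\SOL(\cspi_\omega[V(j)])$; the lower bound then follows because a balanced node has $|V(j)|$ between $n/4$ and $n/2$, and the induced instance of the $<$-chain on such a subset still has roughly $(n/2)^{n/4}$ solutions --- locally the chain is loose, even after pruning. Globally, however, the chain forces the unique solution $x_i=i$, and the unconstrained schedule extracts it by two propagation sweeps $b_i^{\uparrow}$ and $b_i^{\downarrow}$ (projections of joins along the chain, upward and downward), with each chain constraint reused in both sweeps, followed by cheap joins of the resulting singleton constraint with every constraint to certify satisfiability. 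This is exactly the DAG/reuse phenomenon you describe, but realized by reusing many small-scope constraints twice instead of projecting one large key constraint many times. If you want to salvage your approach, that is the modification you need: no single constraint may ever cover a balanced variable set, yet the union of the ``tight'' constraints must admit a global inference (here, transitivity of $<$) that joins and projections can carry out cheaply in both directions.
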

\lv{
\begin{proof}
  Let $n=\omega*16$ and let $\cspi_\omega=\tuple{V,D,C}$ be the CSP instance with variables
  $x_1,\dotsc,x_n$, domain $\{1,\dotsc,n\}$, and the following
  constraints:
  \begin{itemize}
  \item A set $C_<=\SB c_1,\dotsc,c_{n-1}\SE$ of binary constraints $c_i$
    with scope $(x_i,x_{i+1})$ containing all tuples $t \in n \times
    n$ such that $t[x_i] < t[x_{i+1}]$,
  \item A set $C_C=\SB c_{l,m} \SM l < m-1 \land 1 \leq l,m \leq n
    \SE$ of binary and complete constraints $c_{l,m}$ with scope $(x_l,x_m)$.
  \end{itemize}
  We start by showing that $\jw(\cspi_\omega)\geq \omega$. Let
  $(J,\jtbi)$ be any join decomposition for $\cspi_\omega$. Let $j$ be a node of
  $J$ such that $\lceil 1/4n\rceil \leq |V(j)| \leq
  \lfloor 1/2n \rfloor$, which
  exists due to the following well-known argument.
  
  First we show that there is a node $j'$ of $J$ such that $|V(j')| \geq \lfloor
  1/2n\rfloor$ but $|V(j'')| \leq \lfloor 1/2n\rfloor$
  for every child $j''$ of $j'$ in $J$. Namely, $j'$ can be found by
  going down the tree $J$ starting from the root and choosing a child
  $j''$ of $j'$ with $|V(j'')| \geq \lfloor 1/2n\rfloor$; as long as
  such a child $j''$ exists. Then letting $j$ be the child
  of $j'$ maximizing $|V(j)|$ implies that
  $\lceil 1/4n \rceil \leq |V(j)|\leq \lfloor 1/2n\rfloor$,
  as required.

  Note that
  because $\cspi_\omega$ has a constraint with scope $(x_l,x_m)$ for
  every $l<m$, it holds that $\overline{V}(j)=V$ for the node $j \in V(J)$;
  this is because there is at least one variable in $V \setminus V(j)$
  and this variable occurs in a binary constraint with every other variable.
  Hence by Lemma~\ref{lem:jtp-part-sol}, it
  holds that $C(j)=\SOL(\cspi_\omega[V(j)])$.
  
  We claim that $C(j)$ contains at least $(n/2)^{n/4}$
  tuples. Towards showing the claim, let $\cspi_\omega(i)$ be any
  sub-instance of $\cspi_\omega$ induced by any subset of $V$ of size
  $i$. It is easy to see that the domain of any variable in
  $\cspi_\omega(i)$ has size at least $n-i$ and hence
  $\SOL(\cspi_\omega)$ has at least $(n-i)^i$ tuples. Hence, because
  $\lceil 1/4n \rceil|V(j)|\leq \lfloor 1/2n\rfloor$, we obtain that
  $C(j)$ has at least $\min \{(3/4n)^{1/4n},(1/2n)^{1/2n}\}\geq
  (n/2)^{n/4}$ tuples, as required.

  Finally, since $\noftup(\cspi_\omega)=n^2$, we obtain the following for the
  joinwidth of $\cspi_\omega$.
  \begin{eqnarray*}
    \jw(\cspi_\omega) & = & \log_{n^2}((n/2)^{n/4})\\
                      & = & n/2\log_{n^2}n/4 \\
                      & = & n/2(\log_{n^2}n-\log_{n^2}4) \\
                      & = & n/2(1/2-\frac{\log_44}{\log_4n^2}) \\
                      & = & n/2(1/2-\frac{1}{\log_4n^2})\\
                      & \geq & n/2(1/2-\frac{1}{4}) \\
                      & = & n/8 \\
                      & = & 2\omega
  \end{eqnarray*}
  The inequality in the above sequence follows since $n^2\geq 16^2=4^4$.

  It remains to show that we can solve $\cspi_\omega$ in time
  $\bigoh(|\cspi_\omega|^4)$ using only join and projection operations.

  Towards showing this we now define various auxiliary constraints that can be
  obtained efficiently using only join and projection
  operations from the constraints in $C$.
  
  Namely, for every $i$ with $1 < i \leq n$ let $b_i^\uparrow$  be the
  constraint defined iteratively as follows. Set $b_2^\uparrow$ to be
  the constraint $\proj{c_1}{\{x_2\}}$ and for every $i>2$, set
  $b_i^\uparrow$ to be the constraint $\proj{b_{i-1}^\uparrow \join
    c_{i-1}}{\{x_i\}}$. Note that $S(b_i^\uparrow)=\{x_i\}$ and
  $R(b_i^\uparrow)$ contains all tuples $t$ with $t[x_i] \in
  \{i,\dotsc,n\}$. Moreover, it follows immediately from their
  defintion that the constraints $b_2^\uparrow,\dotsc,b_n^\uparrow$ can be computed in time at most
  $\bigoh(n^4)=\bigoh(|\cspi_\omega|^4)$ using only join and projection operations.

  Similarily, let $b_i^\downarrow$ for every $i$ with $1 \leq i < n$ be the
  constraint defined recursively as follows. Let $b_{n-1}^\downarrow$ be
  the constraint $\proj{c_{n-1}}{\{x_{n-1}\}}$ and for every $i<n-1$, we set
  $b_i^\downarrow$ to be the constraint $\proj{b_{i+1}^\downarrow \join
    c_i}{\{x_i\}}$. Note that $S(b_i^\downarrow)=\{x_i\}$ and
  $R(b_i^\downarrow)$ contains all tuples $t$ with $t[x_i] \in
  \{1,\dotsc,i\}$. Moreover, it follows immediately from their
  definition that the constraints
  $b_1^\downarrow,\dotsc,b_{n-1}^\downarrow$ can be computed in time at most
  $\bigoh(n^4)=\bigoh(|\cspi_\omega|^4)$ using only join and projection operations.

  Note that $b_i^\uparrow \join b_i^\downarrow$ for every $i$ with $1
  < i < n$ has scope $\{x_i\}$ and contains only one tuple, i.e., the
  tuples $t$ with $t[x_i]=i$. Moreover, $b_1^\downarrow$ and
  $b_{n}^\uparrow$ also contain only one tuple, i.e., the tuples $t$
  with $t[x_1]=1$ and $t[x_n]=n$, respectively. To use this
  observation let $b_1$ be the constraint $b_1^\downarrow$, $b_n$ be
  the constraint $b_n^\uparrow$ and for every $i$ with $1 < i < n$
  let $b_i$ be the constraint $b_i^\uparrow \join
  b_i^\downarrow$. Then $b_i$ has scope $\{x_i\}$ and $R(b_i)$
  contains only one tuple, i.e., the tuple $t$ with
  $t[x_i]=i$. Moreover, it follows immediately from the definition of
  the constraints $b_1,\dotsc, b_n$ that they can be computed from the
  constraints $b_i^{\uparrow},b_i^\downarrow$ in time at most
  $\bigoh(n^2)=\bigoh(|\cspi_\omega|^2)$ using only join
  operations.
  
  Note that at this stage, we have already identified the unique
  solution of $\cspi_\omega$; the one that sets every variable $x_i$
  to the value $i$. However, since we have not yet even considered all
  constraints of $\cspi_\omega$, we cannot yet bet sure that what we computed is actually a solution
  of $\cspi$. To circumvent this caveat, we now compute the
  constraint $b=b_1\join \dotsb \join b_n$ in time $\bigoh(n)$; note
  that $b$ has scope $V$ and $R(b)$ only contains the tuple $t$ with
  $t[x_i]=i$ for every $i$ with $1 \leq i \leq n$. Finally, we join
  every constraint $c \in C$ with $b$ one-by-one as follows. Assume
  that $C=\{c^1,\dotsc,c^{|C|}\}$. Then for every $i$ with $1 \leq i
  \leq |C|$, we now compute the constraint $a_i$ iteratively by
  setting: $a_1$ to be the constraint $b \join c^1$ and for every
  $i>1$, $a_i$ is the contraint $a_{i-1}\join c^i$. Now we can be sure
  that the constraint $a_{|C|}$ contains all solutions (and only the
  solutions) of $\cspi$. Since $a_i$ is equal to $b$ for every $i$ it
  also follows that we can compute the constraints
  $a_1,\dotsc,a_{|C|}$ in time $\bigoh(|C|)=\bigoh(|\cspi_\omega|)$
  using only join operations.
\qed \end{proof}
}

\section{Conclusions and Outlook}\label{sec:conclusion}
The main contribution of our paper is the introduction of the
notion of a join decomposition and the associated parameter joinwidth
(Definitions~\ref{def:jointree} and \ref{def:width}). These notions are
natural as they are entirely based on fundamental operations of
relational algebra: joins, projections, and pruning (which can equivalently be
stated in terms of semijoins). It is also worth noting that our algorithms seamlessly extend
to settings where each variable has its own domain (this can be modeled, e.g., by unary constraints).

\lv{
We establish several structural and complexity results that put our
new notions into context. In particular, we show that:
\begin{enumerate}
\item bounded joinwidth captures and properly
  contains several known restrictions that render CSP tractable
  (Theorem~\ref{the:jtw-fhtw}, Propositions~\ref{pro:rootset}, \ref{pro:hereditarybounded} and~\ref{pro:uniquedepth}). 
\item\label{r:poly} CSP instances of bounded joinwidth can be
  solved in polynomial time assuming the corresponding join decomposition is
  provided with the input (Theorem~\ref{the:jt});
\item\label{r:nph} finding a join decomposition of optimal width
  is NP-hard, already for a constant upper bound on the width
  (Theorem~\ref{the:comp-jt-np}), mirroring the situation surrounding
  fractional hypertree width \cite{FischlGottlobPichler18};
\item CSP
  instances of bounded joinwidth can be solved by a single-exponential 
  fixed-parameter algorithm parameterized either by the number of constraints
  (Corollary~\ref{cor:fptconst}) or the number of variables
  (Theorem~\ref{the:fptvar});
\item\label{r:submodular} there are instances of bounded
  joinwidth but unbounded submodular (or adaptive) width
  (Observation~\ref{obs:fhtw}); bounded submodular width is the most
  general hypergraph restriction that allows for fixed\hy parameter
  tractability of CSP under the Exponential Time Hypothesis~\cite{Marx13};
\item \label{r:beyond} using joins and projections,
  one can even solve instances of unbounded joinwidth
  (Theorem~\ref{the:beyond}).
\end{enumerate}
} 

\lv{
Our results give rise to several interesting directions 
for future work. We believe that result~(\ref{r:poly}) can be generalized to other
problems, such as \#CSP
or the FAQ-Problem \cite{KhamisNgoRudra16}.
Result (\ref{r:nph}) gives rise to the question of whether there
exists a polynomial-time approximation algorithm for computing
join decompositions of suboptimal joinwidth, similar to Marx's algorithm for
fractional hypertree-width \cite{Marx10b}. One can also try to develop
an efficient heuristic approach for computing the exact joinwidth, possibly similar to the SMT-encoding for fractional hypertree width as recently proposed by Fichte et
al.~\cite{FichteHecherLodhaSzeider18}.

Result (\ref{r:submodular}) shows that submodular width is not more
general than joinwidth. We conjecture that also the converse
direction holds, i.e., that the two parameters are actually
incomparable. Motivated by result (\ref{r:beyond}), one could try to define a natural
parameter that captures the full generality of join and
projection operations, or to at least define a parameter that is more
general than join decompositions without sacrificing the simplicity of the
definition.
}

\sv{
Our results give rise to several interesting directions 
for future work. We believe that Theorem~\ref{the:jt} can be generalized to other
problems, such as \#CSP
or the FAQ-Problem \cite{KhamisNgoRudra16}.
Theorem~\ref{the:comp-jt-np} gives rise to the question of whether there
exists a polynomial-time approximation algorithm for computing
join decompositions of suboptimal joinwidth, similar to Marx's algorithm for
fractional hypertree-width~\cite{Marx10b}.

Observation~\ref{obs:fhtw} shows that submodular width is not more
general than joinwidth. We conjecture that also the converse direction
holds, i.e., that the two parameters are actually
incomparable. Motivated by Theorem~\ref{the:beyond}, one could try to
define a natural parameter that captures the full generality of join
and projection operations, or to at least define a parameter that is
more general than join decompositions without sacrificing the
simplicity of the definition.  }

\bibliographystyle{plainurl}
\bibliography{literature}
 
\end{document}